\newcommand{\comment}[1]{}
\newtheorem{theorem}{Theorem}
\newtheorem{assumption}[theorem]{Assumption}
\newtheorem{lemma}{Lemma}
\newcommand{\bee}{\begin{equation}}
\newcommand{\eee}{\end{equation}}
\newcommand{\bea}{\begin{eqnarray}}
\newcommand{\eea}{\end{eqnarray}}
\newcommand{\bean}{\begin{eqnarray*}}
\newcommand{\eean}{\end{eqnarray*}}
\begin{document}
\nocite{*}

\title{ From quadratic Hawkes processes to super-Heston rough volatility models with Zumbach effect}
\author{Aditi Dandapani, Paul Jusselin and Mathieu Rosenbaum\\
$~~$\\
\'Ecole Polytechnique}

\date{\today }
\maketitle

\begin{abstract}
Using microscopic price models based on Hawkes processes, it has been shown that under some no-arbitrage condition, the high degree of endogeneity of markets together with the phenomenon of metaorders splitting generate rough Heston-type volatility at the macroscopic scale. One additional important feature of financial dynamics, at the heart of several influential works in econophysics, is the so-called feedback or Zumbach effect. This essentially means that past trends in returns convey significant information on future volatility. A natural way to reproduce this property in microstructure modeling is to use quadratic versions of Hawkes processes.  We show that after suitable rescaling, the long term limits of these processes are refined versions of rough Heston models where the volatility coefficient is enhanced compared to the square root characterizing Heston-type dynamics. Furthermore the Zumbach effect remains explicit in these limiting rough volatility models.
\end{abstract}

\section{Introduction}

\label{intro}
Since the paper \cite{gatheral2018volatility}, it has been well accepted that volatility is rough. This means that log-volatility essentially behaves as fractional Brownian motion with Hurst parameter of order $0.1$, see also for example \cite{bennedsen2016decoupling,da2018volatility,glasserman2018buy,livieri2018rough}. There are microstructural foundations for rough volatility that use Hawkes processes to create a microscopic model for asset prices. In this vein, the authors in \cite{el2018microstructural} consider four stylized facts concerning market microstructure: the high degree of endogeneity of markets, the no-arbitrage property, buying/selling asymmetry and the long memory of the market order flow generated by metaorders. They show that when only the three first stylized facts are taken into account, one obtains the Heston model for the scaling limit of the price process. When the long memory property of the flow is added, the limit is the rough Heston model introduced and developed in \cite{el2018perfect,el2019characteristic}. In the rough Heston model, the spot variance 
$V_t$ can be written as follows:
\begin{equation}
\label{eq:rough_heston}
V_t = V_0+ \frac{\lambda}{\Gamma(1-\alpha)}\int_0^t (t-s)^{\alpha - 1}\big(\theta_0(s) - V_s \big)\mathrm{d}s + \nu\sqrt{V_s}\mathrm{d}B_s,
\end{equation}
where $\lambda$ and $\nu$ are some positive constants, $\theta_0$ is a deterministic function, $\alpha \in (1/2, 1)$ and $B$ is a Brownian motion. The rough behavior is due to the singular kernel $(t-s)^{\alpha - 1}$ which is the same as that appearing in the Mandelbrot-van Ness representation of a fractional Brownian motion with Hurst parameter $\alpha-1/2$. More recently, assuming only that the order flow is driven by a linear Hawkes process and that there is no statistical arbitrage on the market, it is shown in  \cite{jusselin2018no} that the price necessarily follows a rough Heston model. In fact, as far as we know, all the works on microstructural foundations of rough volatility have hitherto produced a rough Heston model.\\

\noindent However, in the context of rough models, there are other aspects of volatility that one could wish to understand from a microstructural perspective. A first point is to go beyond the square root associated to the dynamic of the volatility in the rough Heston model \eqref{eq:rough_heston}. A particularly interesting case is when an additional additive or multiplicative term appears, enhancing the square root and leading to fatter volatility tails, see \cite{jaber2018multi,blanc2017quadratic}.\\ 

\noindent Another important stylized fact of financial time series is the feedback of price returns on volatility. This phenomenon is introduced by Zumbach in \cite{zumbach2010volatility} where he measures the impact of price trends on future volatility, see also \cite{lynch2003market,zumbach2009time}. It is demonstrated that price trends induce an increase of volatility. We refer to this property as {\it Zumbach effect}. In the literature, see notably \cite{chicheportiche2014fine}, a way to reinterpret the Zumbach effect is to consider that the predictive power of past squared returns on future volatility is stronger than that of past volatility on future squared returns. To check this on data, one typically shows that the covariance between past squared price returns and future realized volatility (over a given duration) is larger than that between past realized volatility and future squared price returns, see \cite{blanc2017quadratic,chicheportiche2014fine,el2020zumbach} for more details. We refer to this version of Zumbach effect as {\it weak Zumbach effect}.\\

\noindent It has been proved in \cite{el2020zumbach} that the rough Heston model reproduces the weak form of Zumbach effect. However, it is not obtained through feedback effect, which is the motivating phenomenon in the original paper by Zumbach \cite{zumbach2010volatility}. It is only due to the dependence between price and volatility created by the correlation of the Brownian motions driving their dynamics. In particular in the rough Heston model, the conditional law of the volatility depends on the past dynamic of the price only through the past volatility, see \cite{el2018perfect}. From now on, we speak about {\it strong Zumbach effect} when the conditional law of future volatility depends not only on past volatility trajectory but also on past returns.\\

\noindent Inspired by the methodology of \cite{el2018microstructural}, our goal in this paper is to propose microstructural foundations for the strong Zumbach effect. We also wish to obtain models such that the instantaneous volatility of variance is equal to the classical square root term of Heston like models multiplied by a non-trivial process, in order to enhance volatility tails. Any model satisfying the latter property will be called a {\it super-Heston model}.\\

\noindent A convenient way to build a microscopic model, encoding Zumbach effect and leading naturally to super-Heston rough volatility, is to use a quadratic Hawkes based price process, in the spirit of \cite{blanc2017quadratic}. More precisely, we consider the following microstructural model for the price $(P_t)_{t\geq 0}$: it is piecewise constant with sizes of price jumps independent and identically distributed taking values $1$ or $-1$ with probability $1/2$. The jump times are those of a counting process $N$. We assume that $N$ is a quadratic Hawkes type process as introduced in \cite{blanc2017quadratic, ogata1981lewis}. This means the intensity $(\lambda_t)_{t\geq 0}$ of $N$ is given by 
\begin{equation}
\label{eq:intensity_quadratic_hawkes}
\lambda_t = \mu + \int_{0}^t \phi(t-s) \mathrm{d}N_s + Z_t^2,\text{ with }Z_t =  \int_0^t  k(t-s)\mathrm{d}P_s,
\end{equation}
where $\phi$ and $k$ are two non-negative measurable functions supported on $\mathbb{R}_+$ and $\mu>0$. In the definition of the intensity, the linear term with kernel $\phi$ enables us to model the self-exciting nature of the order flow. The component $Z_t$ is a moving average of past returns. It can be thought of as a proxy for price return over a given time horizon. If the price has been essentially trending in the past, $Z_t$ is large leading to high intensity. On the contrary if it has been oscillating, $Z_t$ is close to zero and there is low feedback from the returns on the intensity. Hence $Z_t$ can obviously be understood as a (strong) Zumbach term. Note that of course one can think that positive and negative price trends have different impact on the volatility. However for simplicity we neglect this asymmetry in this paper. Finally recall that the stability condition for Model \eqref{eq:intensity_quadratic_hawkes} is $\|\phi \|_1 + \|k\|_2^2$ being strictly smaller than one, see \cite{blanc2017quadratic}.\\

\noindent Remark that if we forget the quadratic term $Z_t^2$ in the intensity, we are left with a linear Hawkes process just as in \cite{Jaisson:2016aa}. In this case, at the scaling limit, if the kernel $\phi$ is heavy tailed and if we are near instability, meaning $\|\phi\|_1$ tends to one with the time parameter driving the asymptotic, the rescaled intensity process converges in law to a rough dynamic similar to \eqref{eq:rough_heston}, see \cite{el2018microstructural,Jaisson:2016aa}. When the kernel norm $\| \phi \|_1$ is fixed and strictly smaller than one, a deterministic limiting model is obtained. Thus we see that being in the near instability regime is crucial so that roughness can arise from the kernel $\phi$. Recall that this regime corresponds to a high degree of endogeneity of the market, see \cite{filimonov2012quantifying,hardiman2013critical,Jaisson:2015aa,Jaisson:2016aa}\\

\noindent In \cite{blanc2017quadratic}, the authors study the long term behavior of the intensity of quadratic Hawkes processes. That is, on the time horizon $[0,T],$ letting $T$ tend to infinity, they are interested in the limiting dynamic of $(\lambda_{tT})_{t\in[0,1]}$, which can be viewed as the macroscopic (squared) volatility. They work in a setting where $\|\phi \|_1 + \|k\|_2^2=2\gamma< 1$ is fixed, not depending on $T$. Based on PDE techniques, they obtain a diffusion process with power-law marginal distributions and strong Zumbach effect for the asymptotic volatility. More precisely, their limiting model $(\hat{P}_t, V_t)$ for price and volatility writes as follows:
$\mathrm{d}\hat{P}_t = \sqrt{V_t}\mathrm{d}B_t$ with
\begin{align*}
V_t &= \mu + (Z_t)^2 + \int_0^t \gamma \beta e^{-\beta(t-s)} V_s\mathrm{d}s\\
Z_t &= \int_0^t \sqrt{\gamma\alpha } e^{-(t-s)\alpha/2}\mathrm{d}\hat{P}_s,
\end{align*}
with $B$ a Brownian motion and $\alpha$, $\beta$ some positive parameters defining the functions $\phi$ and $k$ taken exponential in \cite{blanc2017quadratic}.\\

\noindent In this paper, we wish to go beyond the case treated in \cite{blanc2017quadratic} from which we draw inspiration.  
We describe further relevant limiting price dynamics that can be generated from quadratic Hawkes processes. We focus on finding microscopic basis for super-Heston rough volatility processes with strong Zumbach effect. Our goal is to establish connections between micro-parameters of the quadratic Hawkes dynamic and macro-phenomena such as the roughness of the volatility and the strong Zumbach effect.\\

\noindent We first focus in Section \ref{sec:purely_quadratic} on the purely quadratic case, that is when $\phi$ is equal to zero. Choosing appropriate scaling parameters, we obtain the following limiting model:
$\mathrm{d}\hat{P}_t = \sqrt{V_t}\mathrm{d}B_t$ with
\begin{align}\label{eq:purely_quadratic_model}
V_t &= \mu + Z_t^2\\
Z_t &=\sqrt{\gamma}\int_0^t k(t-s)\mathrm{d}\hat{P}_s,\nonumber
\end{align}
where $\gamma\in(0,1)$ is related to the scaling of the kernel $k$.
In contrast to the purely linear case, we do not need any sort of near instability here so that a stochastic volatility model arises at the scaling limit. In \eqref{eq:purely_quadratic_model} the strong Zumbach effect is naturally encoded since the volatility is a functional of past price returns through $Z$. We also have that the quadratic feedback of price returns on volatility implies that $V_t$ is of super-Heston type (essentially log-normal here). This can be seen for instance when $\mu=0$ where we get
$$Z_t = \sqrt{\gamma}\int_0^t k(t-s)|Z_s|\mathrm{d}B_s.$$
Moreover taking for example $k= f^{H + 1/2, \lambda}$ for $H\in (0, 1/2)$ and $\lambda>0$ with $f^{\alpha, \lambda}$ the Mittag-Leffler function\footnote{See \cite{el2019characteristic} for a reminder and connections with the Mandelbrot-van Ness representation of fractional Brownian motion.}, we get that the volatility has H\"older regularity $H-\varepsilon$ for any $\varepsilon>0$. Thus, from a natural microscopic dynamic, we are able to obtain a super-Heston rough volatility model with strong Zumbach effect at the macroscopic limit.\\

\noindent We then investigate the limiting models arising from quadratic Hawkes processes with non-vanishing linear part. Knowing that roughness can be obtained from the linear part only in the near instability regime, we treat separately this case and the stable one. We consider in Section \ref{sec:full_quadratic_stable} the situation where the stability condition is not asymptotically violated. The result is similar to \eqref{eq:purely_quadratic_model} up to the addition of a drift term $\beta\int_0^t\phi(t-s)V_s\mathrm{d}s$ in the dynamic of $V_t$, where $\beta$ is a constant related to the scaling procedure.\\

\noindent We study the nearly unstable case where the $L^1$\footnote{We use the notation $L^p$ without reference to the underlying domain when no confusion is possible.} norm of the kernel driving the linear part tends to one in Section \ref{sec:unstable_case}. Assuming $\phi(x)$ behaves as $x^{-(1+\alpha)}$, $\alpha\in(1/2,1)$, when $x$ goes to infinity, we prove that the following dynamic arises at the scaling limit: $\mathrm{d}\hat{P}_t = \sqrt{V_t}\mathrm{d}B^{(1)}_t$ with
\begin{align}\label{eq:full_quadratic_model_unstable}
V_t &= \frac{1}{\Gamma(\alpha)}\int_0^t (t-s)^{\alpha - 1}\lambda \big( \theta^0(s) + Z_s^2 - V_s\big)\mathrm{d}s +  \frac{1}{\Gamma(\alpha)}\int_0^t (t-s)^{\alpha - 1}\lambda \eta \sqrt{V_s}\mathrm{d}B^{(2)}_s\\
 Z_t &= \int_0^t k(t-s)\sqrt{V_s}\mathrm{d}B^{(1)}_s\nonumber,
\end{align}
with $\lambda$, $\eta$ some positive constants, $\theta^0$ a deterministic function and $(B^{(1)}, B^{(2)})$ two independent Brownian motions. As in the linear case, the near instability condition leads to appearance of a second Brownian motion driving a rough Heston type term. We see that the strong Zumbach effect is still reproduced thanks to the $Z^2_t$ term which is here convolved with a power-law kernel. Interestingly, we also show that when $k$ is regular, the $\mathrm{d}s$ term is proportional (up to a finite variation term) to $\int_0^th(t-s)Z_s\mathrm{d}Z_s$, where $h$ is a deterministic function with $h(0)<+\infty$. This can be interpreted as an essentially log-normal (non-rough) component, allowing us to view \eqref{eq:full_quadratic_model_unstable} as a super-Heston rough volatility model.

\section{Asymptotic behavior of purely quadratic Hawkes models}
\label{sec:purely_quadratic}
In this section we investigate the possible scaling limits of purely quadratic Hawkes based price processes. This corresponds to \eqref{eq:intensity_quadratic_hawkes} with $\phi = 0$. We devote a specific section to this case since it enables us to convey some of our main ideas in a simplified setting. More precisely, we consider $(N^T)_{T\geq 0}$ with intensity given by
\begin{equation}
\label{eq:purely_quadratic_intuition_a}
\lambda^T_t = \mu_T +  (Z^T_t)^2, \text{ with } Z^T_t = \int_{0}^t  k_T(t-s)\mathrm{d}P^T_s.
\end{equation}
For any $T$, the existence of the process $(N^T, P^T)$ can be obtained from \cite{jacod1975multivariate}. We are interested in the long time behavior of the price $P^T$ and of its intensity $\lambda^T$. Before stating the main result of this section we first discuss (in a non-rigorous manner) our scaling procedure.
\subsection{Scaling procedure}
The scaling procedure consists in finding appropriate factors $\omega_T$ so that the sequence $\omega_T \lambda^T_{tT}$ converges towards a non-degenerate limit. Assume $ \omega_T \lambda^T_{tT}$ converges towards some process $V_t$. Since $[P^T]_t = N^T_t$, we have
$$
\langle P^T \rangle_{Tt} =  \int_0^t T \lambda^T_{Ts}\mathrm{d}s.
$$
Thus we expect the martingale $P^{*T}_t = \sqrt{ \frac{\omega_T}{T}}P^T_{tT}$ to converge since its bracket does. Let $P$ be its limit. Since we wish to get $P$ continuous, we need $\omega_T/T$ to go to zero. From the convergence of $(\sqrt{\omega_T}Z^T_{tT})^2$, we expect that of
$$
\sqrt{\omega_T}Z^T_{tT} =  \int_0^t k_T\big( T(t-s)\big)\sqrt{T} \mathrm{d}P^{*T}_{s}
$$
too, which requires $k_T(T\cdot)\sqrt{T}$ to converge. This leads us to consider, as in \cite{blanc2017quadratic}, a sequence of kernels $k_T$ of the form
$$
k_T = k(\cdot / T) \sqrt{\gamma/T }
$$
for some $\gamma>0$ and $\omega_T=1$ (since we observe that $\omega_T$ plays eventually no role). Finally passing to the limit in \eqref{eq:purely_quadratic_intuition_a} we obtain the following candidate for our limiting process:
$$
V_t = \mu + Z_t^2,\text{ with }Z_t =\int_0^t k(t-s)\mathrm{d}P_s.
$$
\subsection{Assumptions and results in the purely quadratic case}

We now give our exact assumptions, the second of them being purely technical.
\begin{assumption}
\label{assumption:purely_quadratic}~\\

\noindent i) The sequence of kernels $(k_T)_{T \geq 0}$ is given by
$$
k_T = \sqrt{\frac{\gamma}{T}} k( \frac{\cdot}{T} ),
$$
with $\gamma \in (0, 1)$ and $k$ a non-negative measurable function such that $\|k\|_{2}=1$. Furthermore $\mu_T=\mu>0$.\\

\noindent ii)  The function $k$ belongs to  $L^{2+\varepsilon}$ for some $\varepsilon>0$ and for any $0\leq t < t'\leq 1$,  
$$
\int_0^t |k(t'-s) - k(t-s)|^2 \mathrm{d}s < C|t'-t|^{r},
$$
for some $r>0$ and $C>0$ and
$$
 \frac{1}{\eta}\int_0^1 |k(t)|^2 t^{-2\eta}\mathrm{d}t + \int_0^1\int_0^1 \frac{|k(t)-k(s)|^2}{|t-s|^{1+2\eta}} \mathrm{d}s\mathrm{d}t<+\infty
$$
for some $\eta \in (0,1)$.

\end{assumption}

\noindent Note that for $\alpha \in (1/2, 1)$ and $\lambda>0$, the Mittag-Leffler function $f^{\alpha, \lambda}$ satisfies Assumption \ref{assumption:purely_quadratic} $ii)$ for any $\varepsilon \in \big( 0, (2\alpha - 1)/(1-\alpha) \big)$, $\eta \in (0, \alpha - 1/2)$ and $r =2\alpha
 - 1 $.\\

\noindent Under Assumption \ref{assumption:purely_quadratic}, for any $T$, we have $\|k_T\|_2^2 = \gamma <1 $. So the stability condition is not violated at the limit. We now state the main result of this section. Consider the rescaled processes
$$
X^{T}_t = \frac{N^T_{tT}}{T}\text{ and }P^{*T}_t = \frac{1}{\sqrt{T}}P^T_{tT}.
$$
We have the following theorem.
\begin{theorem}
\label{th:pure_quadratic} Under Assumption \ref{assumption:purely_quadratic}, as $T$ goes to infinity, the sequence $(X^T, P^{*T})_{T\geq 0}$ converges in law for the Skorohod topology on $[0, 1]$ towards some processes $(X, P)$ satisfying the following properties:
\begin{itemize} 
\item $X$ is almost surely continuously differentiable.
\item There exists a Brownian motion $B$ such that $$P_t = \int_0^t \sqrt{V_s}\mathrm{d}B_s,$$
 where $V$ is the derivative of $X$ and the unique continuous solution of
\begin{equation}
\label{eq:purely_quadratic_limit}
V_t = \mu + Z_t^2,~~ Z_t =  \int_0^t \sqrt{\gamma} k(t-s)\sqrt{V_s}\mathrm{d}B_s,\text{ on }[0, 1].
\end{equation}
\item For any $\varepsilon>0$, if $k = f^{H + 1/2, \lambda}$ with $H\in (0, 1/2)$ and $\lambda>0$, $V$ has almost surely $H-\varepsilon$ H\"older regularity.
\end{itemize} 
\end{theorem}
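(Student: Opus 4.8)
The plan is to follow the classical tightness--identification--uniqueness scheme, after first recasting the system in terms of the rescaled price martingale. Since the jump signs of $P^T$ are i.i.d., symmetric and independent of the jump times, $P^T$ is itself a martingale with $[P^T]_t=N^T_t$ and compensator $\brac{P^T}_t=\int_0^t\lambda^T_s\,\mathrm{d}s$. A change of variables then gives $X^T_t=[P^{*T}]_t=N^T_{tT}/T$, $\brac{P^{*T}}_t=\int_0^t(\mu+(\widetilde Z^T_s)^2)\,\mathrm{d}s$ with $\widetilde Z^T_t:=Z^T_{tT}=\sqrt{\gamma}\int_0^t k(t-s)\,\mathrm{d}P^{*T}_s$, and the jumps of $P^{*T}$ have size $1/\sqrt T\to0$; so the pair $(P^{*T},\widetilde Z^T)$ is a closed system and $X^T=[P^{*T}]$. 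I would first establish uniform a priori moments: taking expectations in the It\^o isometry for $\widetilde Z^T$ yields, for $g_T(t):=\E[(\widetilde Z^T_t)^2]$, the renewal inequality $g_T(t)=\gamma\int_0^t k(t-s)^2(\mu+g_T(s))\,\mathrm{d}s$, and iterating the convolution with $\gamma\|k\|_2^2=\gamma<1$ gives $\sup_T\sup_{t\le1}g_T(t)<\infty$. Combining Burkholder--Davis--Gundy with the same convolution structure and the hypothesis $k\in L^{2+\varepsilon}$, one bootstraps this to $\sup_T\sup_{t\le1}\E[|\widetilde Z^T_t|^{2p}]<\infty$ for every $p\ge1$, hence uniform $L^p$ control of $\lambda^T_{tT}=\mu+(\widetilde Z^T_t)^2$.

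Next, tightness and identification. The finite variation part $\int_0^\cdot(\mu+(\widetilde Z^T_s)^2)\,\mathrm{d}s$ of $X^T$ is $C$-tight by the moment bounds and Kolmogorov's criterion, while the martingale part of $X^T$ has bracket $\frac1T\int_0^\cdot\lambda^T_{sT}\,\mathrm{d}s\to0$, so $X^T$ is tight with continuous limit points; the martingale $P^{*T}$ has $C$-tight bracket and vanishing jumps, hence is tight (Aldous--Rebolledo) with continuous limits; and for $\widetilde Z^T$ I would bound $\E[|\widetilde Z^T_{t'}-\widetilde Z^T_t|^{2p}]$ by splitting the stochastic integral into the $\int_0^t(k(t'-s)-k(t-s))^2$ and $\int_t^{t'}k(t'-s)^2$ pieces, using Assumption~\ref{assumption:purely_quadratic}~ii) and the moment bounds, and conclude by Kolmogorov, again with continuous limits. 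Along a subsequence $(X^T,P^{*T},\widetilde Z^T)\to(X,P,\widetilde Z)$ in law; passing to a.s. convergence (Skorokhod) and using uniform integrability from the moment bounds, $P$ is a continuous martingale with $\brac{P}_t=X_t=\int_0^t(\mu+\widetilde Z_s^2)\,\mathrm{d}s$ (bracket and quadratic variation $[P^{*T}]=X^T$ have the same limit since jumps vanish), so $X$ is $C^1$ with $V:=\mu+\widetilde Z^2$, and by the martingale representation theorem $P_t=\int_0^t\sqrt{V_s}\,\mathrm{d}B_s$ for some Brownian motion $B$. To identify $\widetilde Z=\sqrt\gamma\int_0^\cdot k(\cdot-s)\,\mathrm{d}P_s$ I would split $k=k\indicator_{[0,\delta]}+k\indicator_{(\delta,1]}$: the tail part is handled by approximating $k\indicator_{(\delta,1]}$ by continuous functions and using stability of the stochastic integral under $P^{*T}\to P$ with convergent brackets, while the near-singular part has $L^2$-norm controlled uniformly in $T$ (and for $P$) by $\int_0^\delta k(s)^2\,\mathrm{d}s\to0$ and the moment bounds. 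Hence every limit point solves \eqref{eq:purely_quadratic_limit}.

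For uniqueness, if $(Z,B)$ and $(Z',B)$ are two continuous solutions with finite second moments, then since $x\mapsto\sqrt{\mu+x^2}$ is $1$-Lipschitz, $h(t):=\E[(Z_t-Z'_t)^2]$ satisfies $h(t)\le\gamma\int_0^t k(t-s)^2h(s)\,\mathrm{d}s$, and iterating with $\|k^2\|_1=1$ and $\gamma<1$ forces $h\equiv0$; the same convolution contraction makes a Picard iteration converge in $L^2$ uniformly on $[0,1]$, giving existence, and since all subsequential limits coincide the whole sequence $(X^T,P^{*T})$ converges. For the roughness statement, when $k=f^{H+1/2,\lambda}$ one has $k(s)^2\asymp s^{2H-1}$ near $0$, so $\int_t^{t'}k(t'-s)^2\,\mathrm{d}s\le C(t'-t)^{2H}$, while Assumption~\ref{assumption:purely_quadratic}~ii) (admissible for $\eta$ arbitrarily close to $H$) gives $\int_0^t(k(t'-s)-k(t-s))^2\,\mathrm{d}s\le C(t'-t)^{2\eta}$; combined with the uniform moments on $V$, BDG yields $\E[|Z_{t'}-Z_t|^{2p}]\le C_p(t'-t)^{p(2H-\varepsilon')}$ for every $p$ and small $\varepsilon'>0$, and Kolmogorov's continuity criterion gives that $Z$, hence $V=\mu+Z^2$, is $(H-\varepsilon)$-H\"older for every $\varepsilon>0$.

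The main obstacles I anticipate are twofold. First, squeezing the \emph{uniform} higher-moment bounds out of the renewal/convolution structure, which hinges on the subcriticality $\gamma<1$ and the $L^{2+\varepsilon}$ integrability of $k$, and which is what makes tightness of $\widetilde Z^T$, uniform integrability in the identification step, and the roughness estimate all work. Second, passing to the limit in the stochastic convolution $\sqrt\gamma\int_0^\cdot k(\cdot-s)\,\mathrm{d}P^{*T}_s$ with a possibly singular kernel $k$: the cut-off-and-approximate argument must be executed with error bounds uniform in $T$, and this is the delicate point that couples the a priori estimates to the convergence of the nonlinear term $(\widetilde Z^T)^2$ in the intensity.
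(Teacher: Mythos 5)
Your overall architecture (tightness, identification of limit points, pathwise uniqueness of the Volterra equation, Kolmogorov for the regularity) matches the paper's in spirit, and two of your ingredients are legitimate, arguably more elementary, alternatives: the uniqueness step via the $1$-Lipschitz property of $x\mapsto\sqrt{\mu+x^2}$ and the convolution--Gronwall iteration $h\leq \gamma^n (k^2)^{*n}*h$ replaces the paper's citation of Zhang's strong existence/uniqueness theorems, and your cut-off identification of the stochastic convolution replaces the paper's trick of integrating $Z^{*T}$ in time and passing to the limit in the deterministic convolution $\int_0^t k(t-s)P^{*T}_s\,\mathrm{d}s$. However, there is a genuine gap at the heart of your a priori estimates: the claim that $\sup_T\sup_{t\leq 1}\mathbb{E}[|\widetilde Z^T_t|^{2p}]<\infty$ for every $p$ is false under Assumption 1, which only requires $k\in L^{2+\varepsilon}$. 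Writing $\widetilde Z^T_t=T^{-1/2}\sum_i k(t-\tau_i/T)\,\Delta P_{\tau_i}$ and conditioning on the jump times, all terms in the expansion of the $2p$-th moment are non-negative and include the diagonal contribution $T^{-p}\sum_i k(t-\tau_i/T)^{2p}$, whose expectation is bounded below by $\mu\,T^{1-p}\int_0^t k(u)^{2p}\mathrm{d}u=+\infty$ as soon as $k\notin L^{2p}$. For the Mittag-Leffler kernel $k=f^{H+1/2,\lambda}$ one has $k(u)\sim u^{H-1/2}$ near $0$, so $k\in L^{2p}$ only for $p<1/(1-2H)$; in particular for $H<1/4$ even the fourth moment of $\widetilde Z^T_t$ is infinite. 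This destroys your Kolmogorov-criterion tightness of $\widetilde Z^T$ in $C([0,1])$ (which needs $p$ large enough that $p\cdot r>1$), the $C$-tightness of the compensator of $X^T$ via second moments of $(\widetilde Z^T)^2$, and the uniform-integrability arguments in the identification step. The paper avoids this entirely: it proves tightness of $Z^{*T}$ only in $L^2([0,1])$ by bounding $\mathbb{E}[\|Z^{*T}\|^2_{W^{\eta,2}([0,1])}]$ (the Sobolev--Slobodeckij norm, whose balls are relatively compact in $L^2$), which uses nothing beyond second moments, and it gets $C$-tightness of $X^T$ and $\Lambda^{*T}$ from the fact that they are increasing with uniformly bounded mean (Jacod--Shiryaev VI-3.21, VI-3.35), not from Kolmogorov. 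You would need to adopt some such second-moment-only compactness device for your proof to go through.

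A secondary, repairable issue concerns the regularity statement: your Kolmogorov argument for $Z$ at the limit requires moments of all orders of $V$, which do hold for the limiting continuous Volterra equation (there the Burkholder--Davis--Gundy bound involves $\mathbb{E}[(\int k(t-s)^2V_s\mathrm{d}s)^p]$ and the convolution Gronwall closes on successive subintervals since $k^2\in L^1$), but this bootstrap must be carried out at the limit and cannot be inherited from the prelimit processes for the reason above; the paper instead transfers the H\"older regularity of $P$ (a time-changed Brownian motion) through the identity $\int_0^t Z_s\mathrm{d}s=\sqrt{\gamma}\int_0^t f^{H+1/2,\lambda}(t-s)P_s\mathrm{d}s$, following Jaisson and Rosenbaum. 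Finally, note that the paper derives the present theorem as the $\phi=0$ case of its Theorem 2, so that only the uniqueness-in-law step is actually proved in this section.
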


\noindent Theorem \ref{th:pure_quadratic} will be generalized in Section \ref{sec:full_quadratic_stable} and its proof is given in Section \ref{proof:th_pure_quadratic}.

\subsection{Discussion of Theorem \ref{th:pure_quadratic}}\label{subsec:disc_pure_quad}

\noindent$\bullet~$Quadratic Hawkes models share many similarities with GARCH and QARCH models, see \cite{engle1982autoregressive, engle1986modelling, sentana1995quadratic}. However, from Theorem \ref{th:pure_quadratic}, we see that we do not need to be in the near instability regime $\|k_T\|^2_2 + \|\phi_T\|_1 \to 1$ in order to obtain a stochastic model at the scaling limit, while it is required in the GARCH setting, see \cite{nelson1990arch}.\\

\noindent$\bullet~$In the limiting model \eqref{eq:purely_quadratic_limit}, volatility and price are driven by the same Brownian motion $B$. This is in contrast to the GARCH case or to that of nearly unstable Hawkes processes where a new Brownian motion appears in the volatility dynamic, see \cite{el2018microstructural}. Compared to the GARCH situation, the difference essentially lies in the very constrained law of the returns here.\\

\noindent$\bullet~$The Zumbach effect is obviously present in the limiting model: the volatility is purely driven by the returns via the term $Z_t$.\\

\noindent$\bullet~$The use of Mittag-Leffler type kernels as in the last point of Theorem \ref{th:pure_quadratic} is very standard in the rough volatility literature, see for example \cite{Jaisson:2016aa}. It enables us to obtain at the limit a rough behavior for the sample paths of the volatility process.\\

\noindent$\bullet~$When $k(t) = \sqrt{ 2\nu} e^{-\nu t},$ Model \eqref{eq:purely_quadratic_limit} is that of \cite{blanc2017quadratic} with $\phi = 0$. Therefore Theorem \ref{th:pure_quadratic} extends the results of \cite{blanc2017quadratic} to any kernel $k$ with suitable integrability conditions. In the next section we provide an even more general extension that encompasses the case $\phi \neq 0$ and clearly shows the super-Heston nature of the dynamic \eqref{eq:purely_quadratic_limit}.

\section{General quadratic Hawkes models: the stable case}
\label{sec:full_quadratic_stable}
We now study the asymptotic behavior of a sequence of general quadratic Hawkes models for which the stability condition is not violated at the limit. We consider $(N^T)_{T\geq 0}$ with intensity given by \eqref{eq:intensity_quadratic_hawkes} (with parameters depending on $T$) where $\|\phi_T\|_1 + \|k_T\|_2^2$ is a fixed constant strictly smaller than one. As in the previous section, we first give intuitions about our scaling procedure.

\subsection{Suitable scaling in the general case}
Using a scaling factor $\omega_T$, the rescaled intensity becomes
\begin{align}
\nonumber \omega_T \lambda^T_{tT} = \mu_T \omega_T& + \int_0^t \phi_T\big( T(t-s)\big)T \omega_T\lambda^T_{Ts}\mathrm{d}s\\ 
& + \int_0^t \phi_T\big( T(t-s)\big)\sqrt{\omega_T T}\mathrm{d}(\sqrt{\frac{\omega_T }{T}}M^T_{Ts}) + (\sqrt{\omega_T}Z^T_{tT})^2,
\label{eq:general_stable_intuition_a}
\end{align}
where
$$
M^T_t = N^T_t - \int_0^{t} \lambda^T_s \mathrm{d}s.
$$
Assume that $(\omega_T \lambda^T_{tT})_{T\geq 0}$ converges and consider the processes $M^{*T}_t = M^T_{tT}\sqrt{\frac{\omega_T}{T}}$ and $P^{*T}_t = P^T_{tT}\sqrt{\frac{\omega_T}{T}}$. We have
$$
\langle P^{*T} \rangle_t = \langle M^{*T} \rangle_t = \int_0^t \omega_T\lambda^T_{Ts}\mathrm{d}s\text{ and }\langle P^{*T}, M^{*T} \rangle_t = 0.
$$
Thus we expect $P^{*T}$ and $M^{*T}$ to converge  towards two martingales $M$ and $P$ such that $\langle M, P\rangle=0$. As in the previous section, to obtain continuous martingales $M$ and $P$, we pick $\omega_T$ such that $\omega_T / T$ tends to zero.\\

\noindent One of our goals being to preserve Zumbach effect in the limit of \eqref{eq:general_stable_intuition_a}, we need a non-degenerate behavior for the feedback term $\sqrt{\omega_T}Z^{T}_{tT}$. We have
$$
\sqrt{\omega_T}Z^T_{tT} = \int_0^t k_T\big( T (t-s)\big)\sqrt{T}\mathrm{d}P^{*T}_t,
$$
which leads us again to the specification
$$
k_T = k(\cdot / T) \sqrt{ \gamma/T }
$$
for some positive $\gamma$. Now, if $\sqrt{\omega_T}Z^T_{tT}$ converges, according to \eqref{eq:general_stable_intuition_a} we should also obtain convergence of
\begin{equation*}
\label{eq:general_stable_intuition_b}
\mu_T\omega_T + \int_0^t \phi_T\big( T(t-s)\big)T \omega_T\lambda^T_{Ts}\mathrm{d}s + \int_0^t \phi_T\big( T(t-s)\big)\sqrt{\omega_T T}\mathrm{d}M^{*T}_s.
\end{equation*}
Thus, since both $\omega_T\lambda^T_{tT}$ and $M^{*T}$ are expected to converge, we set $\mu_T =\mu/\omega_T$ and must ensure the convergence of both $\phi_T(Tt)T$ and $\phi_T(Tt)\sqrt{\omega_T T}$. Because $\omega_T/T$ tends to zero, the first integral dominates the second one. Consequently we only need to take care of the first integral and again we can take $\omega_T=1$. A logical specification is therefore
$$
\phi_T = \phi(\cdot / T) (\beta/T)
$$
for some positive $\beta$. Passing to the limit in Equation \eqref{eq:general_stable_intuition_a} we expect the following limiting model:
$$
V_t = \mu + \int_0^t \beta \phi(t-s) V_s\mathrm{d}s + Z^2_t, \text{ with } Z_t = \int_0^t \sqrt{\gamma}k(t-s)\mathrm{d}P_s.
$$

\subsection{Assumptions and results in the presence of a linear component in the stable case}

We now give our exact assumptions which are quite similar to those in the previous section.

\begin{assumption}
\label{assumption:full_quadratic}~\\

\noindent i) The sequence of kernels is given by  
\begin{equation*}
k_{T}(t)=\sqrt{\frac{\gamma }{T}} k(\frac{t}{T}),~~ \phi_{T}(t)=  \frac{\beta}{T} \phi(\frac{t}{T}),
\end{equation*}
with $0<\gamma+\beta<1$ and $k$ and $\phi$ non-negative measurable such that $\|k\|_2^2 =\| \phi \|_1 = 1$. Furthermore $\mu_T=\mu>0$.\\

\noindent ii) Assumption \ref{assumption:purely_quadratic} ii) holds.

\end{assumption}

\noindent Assumption \ref{assumption:full_quadratic} implies that the stability condition is not violated at the limit. Nevertheless, from a rescaling perspective, the choice of kernels $\phi_T$ and $k_T$ does not seem really natural at first sight. It would be probably more satisfactory to consider kernel sequences of the form $a_T\phi$ and $a'_Tk$  (with $\phi$ and $k$ not depending on $T$) and then investigate the limit of 
$\omega_T \lambda^T_{tT}$ as in \cite{el2018microstructural,Jaisson:2015aa,Jaisson:2016aa}. This would imply here $\phi_T(Tt)T =a_T\phi(Tt)T$. According to Tauberian theorems, see for example \cite{bingham1989regular},  $a_T\phi(Tt)T $ can only converge in that case towards a power-law function of the form $t^{-\delta}$ for some positive $\delta$   and $\phi$ has to be such that $\phi(t)\sim_{+\infty}t^{-\delta}$ up to a slowly varying function. But recall that $\phi$ must be integrable and so we need $\delta \geq 1$. However such choice would lead to difficulties for defining properly the limit of the integral
$$
\int_0^t T a_T\phi\big(T(t-s)\big)\omega_T \lambda^T_{Ts}\mathrm{d}s.
$$
To be able to consider such types of natural but technically more intricate rescaling procedures, we will drop the stability assumption in Section \ref{sec:unstable_case} where we work in the nearly unstable case.\\

\noindent Let us define the rescaled process
$
X^T_t=N_{tT}/T.
$ We have the following theorem whose proof is given in Section~\ref{proof:full_quadratic_limit}.

\begin{theorem}
\label{th:full_quadratic_limit}Under Assumption \ref{assumption:full_quadratic}, the sequence $(X^T, P^{*T})_{T\geq 0}$ is $C$-tight for the Skorohod topology on $[0, 1]$ as $T$ goes to infinity, with the following properties for any limit point $(X,P)$:
\begin{itemize} 
\item $X$ is almost surely continuously differentiable.
\item There exists a Brownian motion $B$ such that $$P_t = \int_0^t \sqrt{V_s}\mathrm{d}B_s,$$
 where $V$ is the derivative of $X$ and the unique continuous solution of
\begin{equation}
\label{eq:general_quadratic_stable_limit}
V_t = \mu+ H_t + Z_t^2,\text{ with } H_t = \int_0^t \beta \phi(t-s) V_s\mathrm{d}s\end{equation} and $$
Z_t = \int_0^t \sqrt{\gamma}k(t-s)\sqrt{V_s}\mathrm{d}B_s,\text{ on }[0, 1].
$$
\item For any $\varepsilon>0$, if $k = f^{H + 1/2, \lambda}$ with $H\in (0, 1/2)$ and $\lambda>0$, $V$ has almost surely $H-\varepsilon$ H\"older regularity.
\end{itemize} 
\end{theorem}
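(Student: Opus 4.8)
The plan is to follow the standard martingale-problem / tightness-and-identification strategy, adapting the argument already outlined for Theorem~\ref{th:pure_quadratic} to accommodate the extra linear term $H_t=\int_0^t\beta\phi(t-s)V_s\mathrm{d}s$. First I would write the rescaled intensity as in \eqref{eq:general_stable_intuition_a} with $\omega_T=1$, so that
$$
\lambda^T_{tT}=\mu+\int_0^t\beta\phi(t-s)\lambda^T_{Ts}\mathrm{d}s+\int_0^t\beta\phi(t-s)\tfrac{1}{\sqrt{T}}\mathrm{d}M^{*T}_s+(Z^T_{tT})^2,
$$
and introduce $M^{*T}_t=M^T_{tT}/\sqrt{T}$, $P^{*T}_t=P^T_{tT}/\sqrt{T}$, which are martingales with $\langle P^{*T}\rangle_t=\langle M^{*T}\rangle_t=\int_0^t\lambda^T_{Ts}\mathrm{d}s=X^T_t$ and $\langle P^{*T},M^{*T}\rangle=0$. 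The first task is an a priori moment bound: using the resolvent $\psi$ of $\beta\phi$ (well defined since $\beta\|\phi\|_1=\beta<1$), one inverts the linear convolution to get $\lambda^T_{tT}=\mu+\mu\!*\!\psi+(Z^T_{tT})^2+\psi\!*\!(Z^T_{\cdot T})^2+(\text{martingale terms})$, and then one bootstraps an estimate of the form $\sup_{t\le1}\mathbb{E}[(\lambda^T_{tT})^p]\le C_p$ uniformly in $T$, exploiting that the martingale contributions carry a $1/\sqrt{T}$ factor and that $\|k_T(T\cdot)\sqrt{T}\|_2^2=\gamma<1$ keeps the quadratic feedback subcritical. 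This is where the condition $\gamma+\beta<1$ (equivalently $\gamma<1$ together with $\beta<1$) is used.

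Given the uniform moment bounds, I would establish $C$-tightness of $(X^T,P^{*T})$ by an Aldous / Kolmogorov-type criterion: the increment $X^T_{t'}-X^T_t=\int_t^{t'}\lambda^T_{Ts}\mathrm{d}s$ is controlled in $L^p$ by the moment bound, giving equicontinuity of $X^T$; tightness of $P^{*T}$ follows from tightness of its bracket $X^T$ together with the bound on jump sizes ($|\Delta P^{*T}|=1/\sqrt{T}\to0$), which also forces any limit point to be continuous. Tightness of the auxiliary process $M^{*T}$ is obtained the same way, and jointly with $Z^{*T}_t:=\sqrt{\gamma}\int_0^t k(t-s)\sqrt{T}\,\mathrm{d}P^{*T}_s$, whose tightness I would get from Assumption~\ref{assumption:full_quadratic}~ii) (the $L^{2+\varepsilon}$ and Hölder-in-$L^2$ bounds on $k$) exactly as in the purely quadratic case. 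Along a converging subsequence, write $(X,P,M,Z)$ for the limit; continuity of the evaluation maps plus convergence of the brackets yields $\langle P\rangle_t=X_t$, $\langle M\rangle_t=X_t$, $\langle P,M\rangle=0$, and $X$ is absolutely continuous with density $V$; a martingale representation gives $P_t=\int_0^t\sqrt{V_s}\mathrm{d}B_s$ for a Brownian motion $B$, and $Z_t=\int_0^t\sqrt{\gamma}k(t-s)\sqrt{V_s}\,\mathrm{d}B_s$.

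The identification step is then to pass to the limit in the rescaled intensity equation. The term $\mu$ is constant; the stochastic integral $\int_0^t\beta\phi(t-s)\tfrac1{\sqrt T}\mathrm{d}M^{*T}_s$ vanishes in probability because its bracket is $\le\tfrac{\beta^2}{T}\|\phi\|_\infty^2\!\int_0^t X^T_s\mathrm{d}s\to0$ (or, without $L^\infty$ bound on $\phi$, via a Hölder/convolution estimate using the moment bound); the linear term $\int_0^t\beta\phi(t-s)\lambda^T_{Ts}\mathrm{d}s$ converges to $H_t=\int_0^t\beta\phi(t-s)V_s\mathrm{d}s$ by the (joint) convergence of the occupation measures $\lambda^T_{Ts}\mathrm{d}s\Rightarrow V_s\mathrm{d}s$ and continuity of convolution against the fixed $L^1$ kernel $\phi$; and $(Z^T_{tT})^2\to Z_t^2$. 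This yields $V_t=\mu+H_t+Z_t^2$ with $Z$ as above, i.e.\ \eqref{eq:general_quadratic_stable_limit}. Uniqueness of the continuous solution of this system I would prove by a Gronwall/fixed-point argument: writing the $V$-equation in resolvent form and estimating $\mathbb{E}\sup_{s\le t}|V^{(1)}_s-V^{(2)}_s|$, the Lipschitz property of $z\mapsto z^2$ on the (a priori bounded-in-$L^p$) solutions and the isometry for the $Z$-integrals close the estimate; this also upgrades the subsequential limit to a genuine limit. Finally, for the Hölder regularity with $k=f^{H+1/2,\lambda}$, I would cite the stated fact that this kernel satisfies Assumption~\ref{assumption:full_quadratic}~ii) with $\eta<\alpha-1/2$ and note that $H_t$ is Lipschitz (hence does not worsen regularity), so the argument is identical to the last bullet of Theorem~\ref{th:pure_quadratic}: Kolmogorov's continuity theorem applied to $Z$ via the $L^2$-increment bound $\mathbb{E}|Z_{t'}-Z_t|^2\lesssim|t'-t|^{2H}$ (up to a slowly varying correction), together with $V=\mu+H+Z^2$ and the boundedness of $V$.

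\textbf{Main obstacle.} The delicate point is the uniform-in-$T$ moment bound on $\lambda^T_{tT}$: the quadratic feedback $(Z^T_{tT})^2$ makes a naive Gronwall argument circular, and one must carefully combine the resolvent of the linear part with the strict subcriticality $\gamma<1$ of the quadratic part — controlling, via the Burkholder--Davis--Gundy inequality, the martingale pieces that couple $P^{*T}$ back into $\lambda^T$ — to get a closed bound. Everything downstream (tightness, vanishing of the $1/\sqrt T$ martingale term, uniform integrability for passing $\lambda^T_{Ts}\mathrm{d}s\Rightarrow V_s\mathrm{d}s$ through the convolution, and uniqueness) rests on this estimate.
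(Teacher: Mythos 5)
Your overall architecture (tightness, then identification of limit points, then regularity) matches the paper's, but the step you single out as the ``main obstacle'' --- a uniform-in-$T$ bound on $\sup_{t\le 1}\mathbb{E}[(\lambda^T_{tT})^p]$ for $p>1$ --- is both unnecessary and almost certainly unobtainable under Assumption \ref{assumption:full_quadratic} alone. Quadratic Hawkes intensities are heavy-tailed (the limiting volatility has power-law marginals, as recalled in the introduction), and if you try to close a second-moment estimate you find that $\mathbb{E}[(Z^T_{tT})^4]$ is controlled via Burkholder--Davis--Gundy by a constant $C_p>1$ times a convolution of $k_T^2$ against second moments of the intensity, so the fixed point only closes under a condition like $C_p\gamma<1$, not under $\gamma+\beta<1$. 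The paper never needs this: tightness of $X^T$ and $\Lambda^{*T}$ follows from the exact first-moment renewal identity $\mathbb{E}[\lambda^T_t]=\mu+\int_0^t(k_T^2+\phi_T)(t-s)\mathbb{E}[\lambda^T_s]\mathrm{d}s$, which gives $\mathbb{E}[\lambda^T_t]\le\mu/(1-\gamma-\beta)$, combined with the tightness criterion for \emph{increasing} processes (Jacod--Shiryaev VI-3.21 and VI-3.35) rather than an Aldous/Kolmogorov criterion; tightness of $M^{*T},P^{*T}$ then comes from their brackets, and tightness of $Z^{*T}$ is obtained in $L^2([0,1])$ via a Sobolev--Slobodeckij bound that again uses only $\mathbb{E}[\lambda^T]$ and the integrability hypotheses on $k$ in Assumption \ref{assumption:purely_quadratic}~ii). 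Relatedly, your decomposition isolates a martingale term $\int_0^t\beta\phi(t-s)T^{-1/2}\mathrm{d}M^{*T}_s$ whose bracket involves $\phi^2$, while $\phi$ is only assumed measurable and integrable; the paper avoids this by keeping the convolution against $\mathrm{d}X^T$ intact and converting it, via integration by parts and Fubini, into $\int_0^t\beta\phi(t-s)X^T_s\mathrm{d}s$, which passes to the limit using only uniform convergence of $X^T$ and $\phi\in L^1$.

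A second, smaller discrepancy: you claim a Gronwall argument yields uniqueness of the limiting system and hence convergence of the whole sequence. The theorem only asserts $C$-tightness with a characterization of limit points, and the paper explicitly states (last bullet of the discussion of Theorem \ref{th:full_quadratic_limit}) that uniqueness in law of the limit points is \emph{not} proved when $\phi\neq 0$; the Volterra equation for $Z$ has a non-Lipschitz, non-Markovian coefficient $\sqrt{\mu+H_s+Z_s^2}$ with $H$ depending on the whole past of $V$, and the a priori moment control on solutions is too weak to run your argument. What the paper does prove, via Theorems 3.1 and 3.3 of Zhang, is continuity of $Z$ (and uniqueness only in the special case $\phi=0$, which is how Theorem \ref{th:pure_quadratic} is deduced). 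Your regularity argument for the Mittag-Leffler kernel is in line with the paper's.
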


\subsection{Discussion of Theorem \ref{th:full_quadratic_limit}}

\noindent$\bullet~$Compared to Theorem \ref{th:pure_quadratic}, only one new term appears in the volatility equation \eqref{eq:general_quadratic_stable_limit}. It comes from the self-exciting part in the Hawkes dynamic. Thus the elements in the discussion of the purely quadratic case in Section \ref{subsec:disc_pure_quad} remain valid here.\\

\noindent$\bullet~$Let us consider the case where $k$ is a continuously differentiable kernel with $0<k(0)<+\infty$. Using integration by parts and Fubini's theorem we can write
\begin{align*}
Z_t &=  \int_0^t \sqrt{\gamma}k(0)\sqrt{V_s}\mathrm{d}B_s + \int_0^t \int_u^{t} \sqrt{\gamma}k'(s-u) \mathrm{d}s \sqrt{V_u}  \mathrm{d}B_u\\
&=  \int_0^t \sqrt{\gamma}k(0)\sqrt{V_s}\mathrm{d}B_s + \int_0^t \int_0^{s} \sqrt{\gamma}k'(s-u)  \sqrt{V_u}  \mathrm{d}B_u \mathrm{d}s.
\end{align*}
Therefore $Z$ is a semi-martingale and up to a finite variation term we have
$$
Z_t^2 = \int_0^t 2\sqrt{\gamma}k(0)Z_s\sqrt{V_s}\mathrm{d}B_s.
$$
We see that the quadratic feedback term in the Hawkes dynamic induces a super-Heston type volatility because of the multiplicative term  $Z_s$ in front of the $\sqrt{V_s}$ in the equation above.\\

\noindent$\bullet~$Let us take the kernel $k$ as the Mittag-Leffler function $f^{\alpha, \lambda}$ with $\alpha\in (1/2, 1)$ and $\lambda>0$ and $\phi(t) = \kappa e^{-\kappa t}$ for some $\kappa>0$. Adapting Theorem 2.1 in \cite{el2018perfect} we get for any $h$ and $t_0$ positive
$$
Z_{t_0+h} = \xi_{t_0}(h) + \tilde{Z}_{h},\text{ with }\tilde{Z}_h = \int_{0}^h \sqrt{\gamma}f^{\alpha, \lambda}(h-s)\mathrm{d}P_{s+t_0}
$$
and
$$
\xi_{t_0}(h) = Z_{t_0} + \int_{0}^h f^{\alpha, \lambda}(h-s)\theta_{t_0}(s)\mathrm{d}s
$$
where
$$
\theta_{t_0}(h) = -Z_{t_0} + \frac{\alpha}{\lambda \Gamma(1-\alpha)}\int_{0}^{t_0}(t_0-s+h)^{-1-\alpha}(Z_s-Z_{t_0})\mathrm{d}s - \frac{(h+t_0)^{-\alpha}}{\lambda \Gamma(1-\alpha)}Z_{t_0}.
$$
Then we can write the forward volatility as
\begin{equation}
\label{eq:fwd_vol}
V_{t_0 + h} =  H_{t_0}e^{-\kappa h}  + \big( \xi_{t_0}(h) \big)^2 + 2\xi_{t_0}(h)\tilde{Z}_h + \mu + \tilde{H}_h + (\tilde{Z_h})^2
\end{equation}
with
$$
\tilde{H}_h = \int_0^h  \phi(h-s) V_{t_0+s}\mathrm{d}s.
$$
The function $\xi_{t_0}$ only depends on $(Z_t)_{0\leq t \leq t_0}$ and cannot be expressed as a function of $(V_t)_{0\leq t\leq t_0}$. So we get from \eqref{eq:fwd_vol} that conditional on the history of the market from time $0$ to $t_0$, the law of $(V_{t_0+h})_{h\geq 0}$ does depend on past returns and not only through past volatility. It means Models \eqref{eq:purely_quadratic_limit} and \eqref{eq:general_quadratic_stable_limit} can reproduce the strong Zumbach effect. So when $k$ is a Mittag-Leffler function, Model \eqref{eq:general_quadratic_stable_limit} is a super-Heston type rough volatility model with strong Zumbach effect.\\

\noindent In the case of exponential kernels $k(t) = \sqrt{2\nu}e^{-\nu t }$ and $\phi(t) = \kappa e^{-\kappa t}$ using similar computations we prove that 
$$
V_{t_0+h} = \mu + \tilde{Z}_h + \tilde{H}_t + e^{-2\nu h}Z_{t_0}^2 + 2Z_{t_0}e^{-\nu h} + e^{-\kappa h}H_{t_0}.\\
$$

\noindent$\bullet~$Finally remark that we do not prove uniqueness in law of the limit points $(X, P)$ in general. However we can show uniqueness in the special case $\phi = 0$. This particular case can be treated because $Z$ is the solution of a stochastic Volterra equation which admits a unique strong solution, see Section \ref{proof:th_pure_quadratic} for details and \cite{jaber2019affine} for more results about uniqueness of rough equations.

\section{Nearly unstable quadratic Hawkes models}
\label{sec:unstable_case}
We now focus on the case where the instability condition becomes almost violated at the limit. Let us consider a sequence of quadratic Hawkes processes $(N^T)_{T\geq 0}$ such that
$$
\|\phi_T\|_1 + \|k_T\|_2^2 \rightarrow 1.
$$
Contrary to the sections before, we wish to work here with a natural renormalization (at least for $\phi$, see comments below Assumption \ref{assumption:full_quadratic}) and therefore take $\phi_T$ of the form
$\phi_T =\beta_T\phi$ with $\beta_T\in (0, 1)$ and $\|\phi\|_1 = 1$. We also assume that $\phi$ is heavy-tailed ($\phi(x)\sim x^{-(1+\alpha)}$ with $\alpha\in (0,1)$ as $x$ tends to infinity) since this type of kernels leads to rough volatility in the case of linear Hawkes processes, see \cite{el2018microstructural,Jaisson:2016aa}.
Again, we start with insights about the suitable scaling procedure.

\subsection{An adapted scaling procedure in the nearly unstable case}
Let  $a_T = \|k_T\|_2^2 +\|\phi_T\|_1.$
We have
$$
\mathbb{E}[\lambda^T_t]= \mu_T +  \int_0^t (k^2_T + \phi_T )(t-s)\mathbb{E}[\lambda^T_s]\mathrm{d}s
$$
and therefore$$
\mathbb{E}[\lambda^T_t]\leq \frac{\mu_T}{1-a_T}.$$
So we naturally define the following renormalized processes:
$$
\lambda^{*T}_t = \frac{1 - a_T}{\mu_T}\lambda^T_{tT}, ~~\Lambda^{*T}_t = \frac{1}{T }\int_{0}^{tT} \lambda^{*T}_s\mathrm{d}s\text{ and }X^T_t  = \frac{1-a^T}{T \mu_T }N^T_{tT}.
$$
Let us assume that $\lambda^{*T}$ converges to some $V$. We can then expect that $\Lambda^{*T}$ and $X^{*T}$ converge to some $\Lambda$ and $X$. Consider the rescaled martingales
$$
M^{*T}_t = \sqrt{\frac{1-a_T}{T \mu_T}}M^T_{tT}\text{ and }P^{* T}_t = \sqrt{\frac{1-a_T}{T \mu_T}}P^T_{tT},
$$
where $ M^T_t = N^T_t - \int_0^t \lambda^T_s\mathrm{d}s$. Since $[M^T]_t =[P^T]_t = N^T_t,$ we have $[M^{*T}]_t = [P^{*T}]_t = X^T_t$. Moreover $\langle M^{*T},P^{*T}\rangle=0$ and so $M^{*T}$ and $P^{*T}$ are likely to converge towards some martingales $M$ and $P$ with same bracket $X$ and such that $\langle M,P\rangle = 0$.\\

\noindent Let $$\psi_T = \sum_{i\geq 1} \phi_T^{*i}.$$ Using Proposition 2.1 in \cite{Jaisson:2015aa}, we deduce from \eqref{eq:intensity_quadratic_hawkes} that 
\begin{equation*}
\lambda^T_t = \mu_T + (Z_t^T)^2 + \int_0^t \psi_T(t-s) ( \mu_T + (Z_s^T)^2 ) \mathrm{d}s + \int_0^t \psi_T(t-s) \mathrm{d}M^{ T}_s.
\end{equation*}
So we have
\begin{align}
\lambda^{*T}_t =& (1-a_T) + \frac{1-a_T}{\mu_T}(Z_{tT}^T)^2 + \int_0^t (1-a_T)T \psi_T\big(T(t-s)\big) ( 1 + \frac{1}{\mu_T}(Z_{sT}^T)^2 ) \mathrm{d}s \label{eq:intuition_limit_a}\\
& + \int_0^t T(1-a_T)\psi_T\big(T(t-s)\big) \frac{1}{\sqrt{T\mu_T(1-a_T)}} \mathrm{d}M^{*T}_{s}. \nonumber
\end{align}
The function $T \psi_T(T \cdot)$ has $L^1$ norm equal to 
$(1 - \beta_T)^{-1}$. Therefore $T(1-a_T)\psi_T(\cdot T)$ is non-vanishing only provided $1 - \beta_T$ is of order $1-a_T$. Consequently we set $\beta_T = 2a_T - 1$ (so that $\beta_T<a_T$).  Since $\|\phi_T\|_1=\beta_T \rightarrow  1$ then $\|k_T\|_2^2 \rightarrow  0$. However we will see that the sequence $k_T$ still plays a role in the limit.\\

\noindent In \eqref{eq:intuition_limit_a} the first integral is
\begin{equation*}
\label{eq:term_a}
\int_0^t T(1-a_T)\psi_T\big(T(t-s)\big)\mathrm{d}s.
\end{equation*}
It already appears in the case of a purely linear Hawkes process. We know from \cite{Jaisson:2015aa, jusselin2018no} that this term is crucial in the limiting behavior of the intensity and that a necessary condition to obtain a non-trivial scaling limit for it is that $T^{\alpha} (1-a_T)$ tends to a positive constant. Under this specification, we need to impose additionally that $T\mu_T(1-a_T)$  converges in order to obtain a non-degenerate asymptotic limit for the last integral in \eqref{eq:intuition_limit_a}.\\

\noindent We now study the terms containing the quadratic feedback:
$$
\frac{1-a_T}{\mu_T}(Z_{tT}^T)^2 \text{ and } \frac{1-a_T}{\mu_T} \int_0^t T \psi_T\big(T(t-s)\big)  (Z_{sT}^T)^2  \mathrm{d}s.
$$
Since $\|T \psi_T(\cdot T)\|_1 = (1-\beta_T)^{-1}$ which tends to infinity, the second term dominates the first one. To make the second term converge, we need a proper behavior of $Z^{*T}_{t}=Z^{T}_{tT}/\sqrt{\mu_T}$. We have
\begin{equation}
\label{eq:intuition_limit_b}
Z^{*T}_t = \sqrt{\frac{T}{1-a_T}}\int_0^t k_T\big(T(t-s) \big) \mathrm{d}P^{* T}_s.
\end{equation}
Thus we wish $ \sqrt{\frac{T}{1-a_T}} k_T(Tt)$ to converge and are naturally lead to assume that $k_T$ is of the form
$$
 k_T= k(\cdot/T)\sqrt{\frac{1-a_T}{T}}. 
$$

\subsection{Assumptions and results in the nearly unstable case}

\noindent We now summarize the conditions derived in the above discussion into the following assumption.
\begin{assumption}
\label{assumption:full_quadratic_unstable}~\\

\noindent i) The sequence of kernels $(\phi_T)_{T \geq 0}$ satisfies $ \phi_T = (2a_T-1) \phi$ with $(a_T)_{T \geq 0}$ a sequence in $(0,1)$ and $\phi$ a non-negative measurable function such that $\|\phi\|_1 = 1$. Furthermore for some $K>0$ and $\alpha \in (0, 1)$,
\begin{equation*}
\label{eq:tail_kernel}
\underset{x\rightarrow + \infty}{\lim}\alpha x^{\alpha}\int_{x}^{+\infty}\phi(s) \mathrm{d}s = K.
\end{equation*}
\noindent ii) The sequence of kernels $(k_T)_{T\geq 0}$ satisfies $k_T = k(\cdot /  T)  \sqrt{\frac{1-a_T}{T}}$ with $k$ a non-negative continuously differentiable function such that $\|k\|_2=1$ (in particular $k(0)<+\infty$).\\

\noindent iii) Let $\delta = K\frac{\Gamma(1-\alpha)}{\alpha}$. There are two positive constants $\lambda$ and $\mu^*$
such that
$$
 \underset{T\rightarrow +\infty}{\lim} (1 - a_T) T^{\alpha} =  \lambda \delta \text{ and }\underset{T\rightarrow +\infty}{\lim} T^{1 - \alpha} \mu_T = \mu^*\delta^{-1}.$$
\end{assumption}

\noindent The choice of $\delta$ in Point $iii)$ is just for convenience of notation in the results and proofs.\\

\noindent Recall that from Lemma 4.3 in \cite{Jaisson:2016aa}, under 
Assumption \ref{assumption:full_quadratic_unstable}, the function 
$$
F^T(t) = \int_0^tT(1-a_T)\psi_T(Ts)\mathrm{d}s
$$
converges towards $\frac{1}{2} F^{\alpha, \lambda}(t)$ where
$$
F^{\alpha, \lambda}(t) = \int_0^t  f^{\alpha, \lambda}(s) \mathrm{d}s.
$$
So $$T(1-a_T)\psi_T(Ts)\sim \frac{1}{2}f^{\alpha, \lambda}(s).$$
This provides us intuition for the form of the limit $(V,Z)$ of 
\eqref{eq:intuition_limit_a}-\eqref{eq:intuition_limit_b}:
$$
V_t =  \int_0^t \frac{1}{2} f^{\alpha, \lambda}(t-s) ( 1 + Z_s^2) \mathrm{d}s + \int_0^t \frac{1}{2} f^{\alpha, \lambda}(t-s) \frac{1}{\sqrt{\lambda\mu^*}} \mathrm{d}M_s,
$$
with $ Z_t = \int_0^t k(t-s)\mathrm{d}P_s $ and where $M$ and $P$ are martingales such that $\langle M, P\rangle= 0$ and $\langle M\rangle_t = \langle P\rangle_t = \int_0^t V_s\mathrm{d}s.$\\

\noindent We eventually state the main result of this section whose proof is given in Section \ref{proof:full_quadratic_unstable}.
\begin{theorem}
\label{th:full_quadratic_unstable}

Under Assumption \ref{assumption:full_quadratic_unstable}, the sequence $( X^T, M^{*T}, P^{*T} )_{T \geq 0}$ is $C$-tight for the Skorohod topology on $[0, 1]$ as $T$ goes to infinity, with the following properties for any limit point $(X,M,P)$:
\begin{itemize}
\item  We have $\langle M \rangle  = \langle P \rangle = X$ and  
\begin{equation}
\label{eq:general_quadratic_unstable_limit}
X_t = \int_0^t \frac{1}{2}F^{\alpha, \lambda}(t-s) \big( 1 + Z_s^2 \big)\mathrm{d}s + \int_0^t \frac{1}{2} f^{\alpha, \lambda}(t-s) \frac{1}{\sqrt{\lambda\mu^*}}M_s \mathrm{d}s 
\end{equation}
with
$$
Z_t = \int_0^t k(t-s) \mathrm{d}P_{s}.
$$
\item If $\alpha \in (1/2, 1),$ the process $X$ is almost surely continuously differentiable with derivative $V$ and up to an enlargement of the filtration there exists two Brownian motions $B^{(1)}$ and $B^{(2)}$ such that $V$ is solution of
\begin{equation*}
V_t = \int_{0}^t \frac{1}{2} f^{\alpha, \lambda}(t-s) \Big( \big( 1 + Z_s^2\big) \mathrm{d}s +  \frac{1}{\sqrt{\lambda\mu^*}}\sqrt{V_s} \mathrm{d}B^{(1)}_s \Big)
\end{equation*}
with
$$
Z_t = \int_0^t k(t-s)\sqrt{V_s}\mathrm{d}B^{(2)}_s.
$$
Moreover, for any $\varepsilon>0$, $V$ has almost surely $\alpha - \frac{1}{2}-\varepsilon$ H\"older regularity. 
\end{itemize}
\end{theorem}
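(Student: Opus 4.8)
The plan is to follow the standard three-step recipe for scaling limits of Hawkes processes (tightness, identification of the limit, regularity), adapting the arguments of \cite{el2018microstructural,Jaisson:2016aa,jusselin2018no} to accommodate the quadratic feedback term $Z^2$. First I would establish $C$-tightness of the triple $(X^T, M^{*T}, P^{*T})$. The key estimate here is an a priori control of moments of $\lambda^{*T}$ uniform in $T$: starting from the renormalized equation \eqref{eq:intuition_limit_a} and using that $\mathbb{E}[\lambda^{*T}_t]\leq 1$, one bootstraps to a bound on $\mathbb{E}[(\lambda^{*T}_t)^p]$ for some $p>1$, exploiting the fact that $\|T\psi_T(T\cdot)\|_1=(1-\beta_T)^{-1}$ stays bounded (this is where $\beta_T=2a_T-1$ matters) and that the contribution of the quadratic term $\frac{1-a_T}{\mu_T}(Z^T_{sT})^2$ is controlled via Burkholder--Davis--Gundy applied to $Z^{*T}$, whose bracket is $\int_0^t \big(\sqrt{T/(1-a_T)}k_T(T(t-s))\big)^2 V^{*T}_s\,\mathrm{d}s$ with the prefactor $\big(\sqrt{T/(1-a_T)}k_T(Tu)\big)^2 = k(u)^2$ converging in $L^1$. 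Tightness of $M^{*T}$ and $P^{*T}$ then follows from Aldous's criterion together with $[M^{*T}]=[P^{*T}]=X^T$ and the uniform moment bound, and $C$-tightness follows because the jumps of $X^T$ are of size $(1-a_T)/(T\mu_T)\to 0$.

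\textbf{Identification of the limit.} Along a convergent subsequence with limit $(X,M,P)$, one first notes $\langle M\rangle=\langle P\rangle=X$ and $\langle M,P\rangle=0$ pass to the limit, and $X$ is continuous. To get \eqref{eq:general_quadratic_unstable_limit} I would pass to the limit in \eqref{eq:intuition_limit_a} integrated once more in $t$ (to handle the singular kernel): using Lemma 4.3 of \cite{Jaisson:2016aa}, $F^T(t)=\int_0^t T(1-a_T)\psi_T(Ts)\,\mathrm{d}s\to \tfrac12 F^{\alpha,\lambda}(t)$, so the drift-type term $\int_0^t T(1-a_T)\psi_T(T(t-s))(1+\tfrac{1}{\mu_T}(Z^T_{sT})^2)\,\mathrm{d}s$ converges, by a Fubini/associativity-of-convolution argument, to $\int_0^t \tfrac12 f^{\alpha,\lambda}(t-s)(1+Z_s^2)\,\mathrm{d}s$ once one knows $Z^{*T}\to Z$ with $Z_t=\int_0^t k(t-s)\,\mathrm{d}P_s$ — and the latter convergence of the stochastic convolution follows from the convergence of $P^{*T}\to P$ and of the rescaled kernel, e.g. via an $L^2$-isometry argument plus the regularity hypotheses on $k$ (Assumption \ref{assumption:full_quadratic_unstable} ii); here $k$ continuously differentiable makes this step cleaner than in Section~\ref{sec:purely_quadratic}). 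The martingale term $\int_0^t T(1-a_T)\psi_T(T(t-s))\frac{1}{\sqrt{T\mu_T(1-a_T)}}\,\mathrm{d}M^{*T}_s$ converges, after integrating in $t$, to $\int_0^t \tfrac12 f^{\alpha,\lambda}(t-s)\frac{1}{\sqrt{\lambda\mu^*}}M_s\,\mathrm{d}s$ using $T\mu_T(1-a_T)\to\lambda\mu^*$ and a stochastic-Fubini argument justified by the uniform bracket bounds. This yields \eqref{eq:general_quadratic_unstable_limit}.

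\textbf{Differentiability and the SDE representation.} For $\alpha\in(1/2,1)$ the kernel $f^{\alpha,\lambda}$ and its primitive $F^{\alpha,\lambda}$ are such that the right-hand side of \eqref{eq:general_quadratic_unstable_limit} is continuously differentiable in $t$ (one can differentiate $F^{\alpha,\lambda}$ pointwise and, for the stochastic term, write $\int_0^t f^{\alpha,\lambda}(t-s)M_s\,\mathrm{d}s$ and use that $f^{\alpha,\lambda}$ is locally integrable with the $M_s\mathrm{d}s$ giving a finite-variation contribution — the key point being that there is no genuinely singular stochastic integrator, only $M_s\,\mathrm{d}s$). Call the derivative $V$; then $X_t=\int_0^t V_s\,\mathrm{d}s$ so $\langle M\rangle_t=\langle P\rangle_t=\int_0^t V_s\,\mathrm{d}s$, and by the martingale representation theorem (after enlarging the filtration if $V$ can vanish, exactly as in \cite{el2018microstructural}) there are Brownian motions $B^{(1)},B^{(2)}$, independent because $\langle M,P\rangle=0$, with $\mathrm{d}M_t=\sqrt{V_t}\,\mathrm{d}B^{(1)}_t$ and $\mathrm{d}P_t=\sqrt{V_t}\,\mathrm{d}B^{(2)}_t$; substituting into the differentiated form of \eqref{eq:general_quadratic_unstable_limit} and into $Z_t=\int_0^t k(t-s)\,\mathrm{d}P_s$ gives the claimed system. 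Finally, the $\alpha-\tfrac12-\varepsilon$ Hölder regularity of $V$ follows from the representation $V_t=\tfrac12\int_0^t f^{\alpha,\lambda}(t-s)\big((1+Z_s^2)\,\mathrm{d}s+\frac{1}{\sqrt{\lambda\mu^*}}\sqrt{V_s}\,\mathrm{d}B^{(1)}_s\big)$ by a Kolmogorov-continuity / Garsia--Rodemich--Rumsey estimate on the fractional convolution, using the uniform moment bounds on $V$ and $Z$ and the fact that $f^{\alpha,\lambda}(u)\sim c\,u^{\alpha-1}$ near $0$, precisely as in \cite{el2018microstructural,el2019characteristic}.

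\textbf{Main obstacle.} I expect the hard part to be the uniform moment control of $\lambda^{*T}$ (equivalently $V^{*T}$) in the presence of the quadratic term: the feedback $Z^2$ feeds back into the intensity, so a naive Gronwall argument on $\mathbb{E}[(\lambda^{*T}_t)^p]$ risks blowing up, and one must carefully use that the quadratic contribution is filtered through the \emph{bounded}-$L^1$-norm kernel $T\psi_T(T\cdot)$ while $Z^{*T}$ itself only sees the \emph{small} kernel $k_T$ (whose rescaling carries the vanishing factor $\sqrt{1-a_T}$), together with BDG and the sub-criticality $\beta_T<a_T<1$, to close the estimate. A secondary delicate point is justifying the stochastic Fubini / passage to the limit for the singular-kernel stochastic convolution term, which is why the statement integrates once in $t$ before differentiating and why the differentiability claim is restricted to $\alpha>1/2$.
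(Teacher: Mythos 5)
Your overall architecture (tightness, identification of the limit along a subsequence via the Skorohod representation, then differentiability, Dambis--Dubins--Schwarz and the regularity transfer from \cite{Jaisson:2016aa}) matches the paper, and your identification and regularity steps are essentially the paper's. The genuine divergence --- and the one real gap --- is the tightness step. You propose to bootstrap from $\mathbb{E}[\lambda^{*T}_t]\le 1$ to a uniform bound on $\mathbb{E}[(\lambda^{*T}_t)^p]$ for some $p>1$ and then invoke Aldous's criterion, and you yourself flag that this is the step that ``risks blowing up''; it is never closed. It is also built on a miscalculation: $\|T\psi_T(T\cdot)\|_1=(1-\beta_T)^{-1}=\big(2(1-a_T)\big)^{-1}\to+\infty$ in the nearly unstable regime, so it does \emph{not} stay bounded (only $T(1-a_T)\psi_T(T\cdot)$ has bounded $L^1$ norm), and a $p$-th moment Gronwall argument does not contract. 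The paper never establishes any moment of order $p>1$ because none is needed: $X^T$ and $\Lambda^{*T}$ are \emph{increasing} processes with $\mathbb{E}[X^T_1]=\mathbb{E}[\Lambda^{*T}_1]\le 1$, so their tightness follows from Theorem VI-3.21 and Proposition VI-3.35 of \cite{jacod2013limit} using only this first-moment bound, $C$-tightness follows since the jumps are of size $(1-a_T)/(T\mu_T)\to 0$, and tightness of $M^{*T},P^{*T}$ is then immediate from Theorem VI-4.13 of \cite{jacod2013limit} because $\langle M^{*T}\rangle=\langle P^{*T}\rangle=\Lambda^{*T}$. Your ``main obstacle'' paragraph should be replaced by this observation.

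Two smaller remarks on the identification step. The paper does not use an $L^2$-isometry or Sobolev-type argument for the convergence of $Z^{*T}$ here: since $k$ is continuously differentiable with $k(0)<+\infty$ (Assumption \ref{assumption:full_quadratic_unstable} $ii)$), one integrates by parts pathwise, $Z^{*T}_t=k(0)P^{*T}_t+\int_0^t k'(t-s)P^{*T}_s\,\mathrm{d}s$, and the almost sure uniform convergence of $P^{*T}$ immediately yields almost sure \emph{uniform} (not merely $L^2([0,1])$) convergence of $Z^{*T}$ towards $\int_0^t k(t-s)\,\mathrm{d}P_s$; this uniformity is what allows one to pass to the limit pathwise in $\int_0^t F^T(t-s)(Z^{*T}_s)^2\,\mathrm{d}s$ via Lemma 4.3 of \cite{Jaisson:2016aa}. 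Likewise the martingale term is treated by the deterministic integration by parts $\int_0^t F^T(t-s)\,\mathrm{d}M^{*T}_s=\int_0^t f^T(t-s)M^{*T}_s\,\mathrm{d}s$, a splitting into two error terms, and a single Burkholder--Davis--Gundy estimate that again consumes only the first-moment bound on $\lambda^T$; your ``stochastic Fubini'' phrasing is compatible with this but should be made precise along these lines.
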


\subsection{Discussion of Theorem   \ref{th:full_quadratic_unstable}}

\noindent$\bullet~$The form of the feedback is not the same in Model \eqref{eq:general_quadratic_unstable_limit} as in Model \eqref{eq:general_quadratic_stable_limit}. In \eqref{eq:general_quadratic_stable_limit} it is instantaneous through the $Z_t^2$ term while in  \eqref{eq:general_quadratic_unstable_limit} it is digested via a convolution with a fractional kernel.\\

\noindent$\bullet~$In Model \eqref{eq:general_quadratic_unstable_limit} price and volatility are driven by two different Brownian motions. This additional Brownian motion comes from the rescaling of the linear part of the intensity, as already observed for example in \cite{Jaisson:2015aa}.\\

\noindent$\bullet~$Rough volatility appears for very different reasons in Model \eqref{eq:general_quadratic_stable_limit} and Model \eqref{eq:general_quadratic_unstable_limit}. In \eqref{eq:general_quadratic_unstable_limit} the origin of rough volatility is the fat tail of the kernel $\phi$ while in \eqref{eq:general_quadratic_stable_limit} it arises from the behavior of the kernel $k$ in zero. Moreover it is clear from the proof of the last point of Theorem \ref{th:full_quadratic_unstable} that the regularity of $Z$ has no influence on that of $V$.\\

\noindent$\bullet~$As computed in the previous section we can write
\begin{equation*}
Z_t =  \int_0^t k(0)\sqrt{V_s}\mathrm{d}B^{(2)}_s + \int_0^t \int_0^{s} k'(s-u)  \sqrt{V_u}  \mathrm{d}B^{(2)}_u \mathrm{d}s.
\end{equation*}
Therefore $Z$ is a semi-martingale and up to a finite variation term we have 
$$
 \mathrm{d}Z_t^2 =  2k(0) Z_t \sqrt{V_s}\mathrm{d}B^{(2)}_s.
$$
Furthermore using integration by parts we get
$$
\int_0^t f^{\alpha, \lambda}(t-s)Z_s^2\mathrm{d}s = \int_0^t F^{\alpha, \lambda}(t-s)\mathrm{d}Z_s^2.
$$
So up to a finite variation term, we have in Model \eqref{eq:general_quadratic_unstable_limit}
$$
V_t = \int_0^t f^{\alpha, \lambda}(t-s) \frac{1}{\sqrt{\lambda\mu^*}}\sqrt{V_s} \mathrm{d}B^{(1)}_s  + \int_{0}^t  F^{\alpha, \lambda}(t-s) k(0)Z_s\sqrt{V_s}\mathrm{d}B^{(2)}_s.
$$
Thus as in Model \eqref{eq:purely_quadratic_limit} and \eqref{eq:general_quadratic_stable_limit}, the quadratic feedback term in the volatility dynamic induces that Model \eqref{eq:general_quadratic_unstable_limit} is a super-Heston type rough volatility model. Note however that in that case, the super-Heston and rough components are not the same.\\

\noindent$\bullet~$Using Lemma A.2 in \cite{el2019characteristic}, when $\alpha\in (1/2, 1)$, we get that Equation \eqref{eq:general_quadratic_unstable_limit} is equivalent to
$$
V_t = V_0 + \frac{1}{\Gamma(\alpha)}\int_0^t (t-s)^{\alpha - 1}\lambda(Z_s^2 + \theta_0(s) - V_s)\mathrm{d}s + \frac{1}{\Gamma(\alpha)}\int_0^t (t-s)^{\alpha - 1}\sqrt{V_s}\frac{1}{\sqrt{\lambda\mu^*}}\mathrm{d}B^{(1)}_s
$$
with $\theta_0$ a deterministic function. In the case $k(t) = \sqrt{2\nu}e^{-\nu t}$ with $\nu>0$, direct adaptation of Theorem 2.1 in \cite{el2018perfect} gives that
\begin{align*}
V_{t_0+h} = V_{t_0}&  + \frac{1}{\Gamma(\alpha)}\int_0^h (h-s)^{\alpha - 1}\lambda(\tilde{Z}_s^2 + 2\tilde{Z}_sZ_{t_0}e^{-\nu s}  +  \theta_{t_0}(s) - V_{t_0+s})\mathrm{d}s \\
& + \frac{1}{\Gamma(\alpha)}\int_0^h (h-s)^{\alpha - 1}\sqrt{V_{t_0 +s }}\frac{1}{\sqrt{\lambda\mu^*}}\mathrm{d}B^{(1)}_s,
\end{align*}
where $\theta_{t_0}(h)$ is equal to
$$
\theta_0(t_0 + h) + \frac{\alpha}{\Gamma(1-\lambda)}\int_0^{t_0}(t_0 -v +h)^{-1-\alpha}(V_v - V_{t_0})\mathrm{d}v + \frac{(t + t_0)^{-\alpha}}{\lambda \Gamma(1-\alpha)}(V_{0} - V_{t_0}) + Z_{t_0}^2 e^{-2\nu h}
$$
and 
$$
\tilde{Z}_h = \int_0^h k(h-s)\mathrm{d}P_{t_0 +s}.
$$
The term $Z_{t_0}$ cannot be written as a function of $(V_t)_{0\leq t \leq t_0}$. So Model \eqref{eq:general_quadratic_unstable_limit} reproduces the strong Zumbach effect. Finally Model \eqref{eq:general_quadratic_unstable_limit} is a super-Heston type rough volatility model with strong Zumbach effect.\\

\noindent$\bullet~$ Despite the many recent works about stochastic Volterra equations, see for example \cite{abi2017affine, jaber2019sve}, the existence of a strong solution to equations such as \eqref{eq:general_quadratic_unstable_limit} remains an open question.

\section{Proofs}
\label{Proofs}

We gather all the proofs in this section. We first show Theorem \ref{th:pure_quadratic} assuming that Theorem \ref{th:full_quadratic_limit} holds. Then we give the proof of Theorem \ref{th:full_quadratic_limit} and finally that of Theorem \ref{th:full_quadratic_unstable}.

\subsection{Proof of Theorem \ref{th:pure_quadratic}}
\label{proof:th_pure_quadratic}

Using the results of Theorem \ref{th:full_quadratic_limit}, we only need to prove that when $\phi = 0$, the limiting process $(X, P)$ in Theorem \ref{th:full_quadratic_limit} $ii)$ is unique in law.\\

\noindent Consider $(X, P)$ a limit point of $(X^T, P^{*T})_{T\geq 0}$. Then $V$ the derivative of $X$ satisfies
$$
V_t = \mu + Z_t^2
$$
and there exists a Brownian motion $B$ such that $P_t = \int_0^t\sqrt{V_s}\mathrm{d}B_s$. Thus we can write
\begin{equation}
\label{eq:proof_pure_a}
Z_t = \int_0^t k(t-s)\sqrt{\mu + Z_s^2}~\mathrm{d}B_s.
\end{equation}
From Assumption \ref{assumption:purely_quadratic} $ii)$ together with Theorems 3.1 and 3.3 in \cite{zhang2010stochastic}, there is a unique process $Z$ satisfying \eqref{eq:proof_pure_a} and it is continuous. Since $P$ is a continuous martingale satisfying $[ P ] = X$, $X = \int_0^t V_s\mathrm{d}s$ and $V_t = \mu + Z_t^2$, the limiting process $(X, P)$ is fully determined by the only solution of \eqref{eq:proof_pure_a}. So we get convergence of $(X^T, P^{*T})$ for the Skorohod topology.

\subsection{Proof of Theorem~ \ref{th:full_quadratic_limit}}
\label{proof:full_quadratic_limit}

We proceed in three steps. First we prove that the sequence $(X^T, P^{*T})_{T\geq 0}$ is $C$-tight for the Skorohod topology. Then we show the results about the dynamics of the limit points. Finally we establish the regularity properties of the limit points.

\subsubsection{Tightness of the sequence $(X^T, P^{*T})_{T\geq 0}$}

We consider the processes
$$
\Lambda^{*T}_t = \int_{0}^t \lambda^{*T}_s\mathrm{d}s \text{ and }Z^{*T}_t = \int_0^t k(t-s)\mathrm{d}P^{*T}_s
$$
defined for $t\in [0, 1]$. Remark that $\Lambda^{*T}$ is the predictable compensator of $X^T$. We have the following equality:
\begin{equation*}
\label{eq:intensity_inequality}
\mathbb{E}[\lambda^T_{t}] = \mu_T + \int_{0}^t  \big(k_T^{2}+\phi_T \big)(t-s)\mathbb{E}[\lambda^T_{s}]\mathrm{d}s. 
\end{equation*}
Thus 
$$
\mathbb{E}[\lambda^T_t]\leq \frac{ \mu }{1 - \|\phi_T\|_1- \|k_T\|_2^2 } 
$$
and consequently
$$
\mathbb{E}[X^T_1] = \mathbb{E}[\Lambda^{*T}_1] \leq \frac{ \mu }{1 - \gamma-\beta}.
$$
Since the processes $X^T$ and $\Lambda^{*T}$ are increasing for any $T$, using the last inequality, we deduce from Theorem VI-3.21 together with Proposition VI-3.35 in \cite{jacod2013limit} that $(X^T)_{T\geq 0}$ and $(\Lambda^{*T})_{T\geq 0}$ are tight. Moreover since $|\Delta X^T| +|\Delta \Lambda^{*T}|\leq 1/T$ almost surely on $[0, 1]$, according to Proposition VI-3.26 in \cite{jacod2013limit}, $(X^T)_{T\geq 0}$ and $(\Lambda^{*T})_{T\geq 0}$ are $C$-tight. The tightness of $(M^{*T})_{T\geq 0}$ and $(P^{*T})_{T\geq 0}$ follows from Theorem VI-4.13 in \cite{jacod2013limit} using that $\langle M^{*T}\rangle_t = \langle P^{* T}\rangle_t = \Lambda^{*T}_t$. We then get $C$-tightness because $|\Delta M^{*T}|+|\Delta P^{*T}|\leq 2/T$. Finally $(X^T, \Lambda^{*T}, M^{* T}, P^{* T})_{T\geq 0}$ is $C$-tight for the Skorohod topology on $[0, 1]$.\\

\noindent We also show that the sequence $(Z^{*T})_{T\geq 0}$ is tight for the $L^2([0, 1])$ topology. For this, inspired by \cite{jaber2019sve}, we consider the Sobolev-Slobodeckij norm defined for any measurable function $f$ by
$$
\|f\|_{W^{\eta, 2}([0, 1])} = \Big( \int_0^1f(s)^2\mathrm{d}s + \int_0^1\int_0^1 \frac{|f(t) - f(s)|^2}{|t-s|^{1+2\eta}}\mathrm{d}s\mathrm{d}t \Big)^{1/2}.
$$
We recall that the closed balls of $\|\cdot\|_{W^{\eta, 2}([0, 1])}$ are relatively compact in $L^2([0, 1])$, see \cite{flandoli1995martingale}. Therefore it is enough to show that $\big( \mathbb{E}[\|Z^{*T}\|^2_{W^{\eta, 2}([0, 1])}] \big)_{T\geq 0 }$ is uniformly bounded to conclude the tightness of $(Z^{*T})_{T\geq 0}$ in $L^2([0, 1])$.\\

\noindent For any $t\in [0, 1],$ we have using Ito's formula
$$
\mathbb{E}[(Z^{*T}_t)^2] = \int_0^t k^2(t-s)\mathbb{E}[\lambda^T_{Ts}]\mathrm{d}s \leq \frac{\mu}{1 - \gamma-\beta} \|k\|_2^2
$$
and for $0\leq s \leq t\leq 1$
$$
Z^{*T}_t - Z^{*T}_s= \int_{s}^t k(t-u)\mathrm{d}P^{*T}_u + \int_0^s \big( k(t-u)-k(s-u) \big) \mathrm{d}P^{*T}_u.
$$
Then we get
$$
\mathbb{E}[(Z^{*T}_t - Z^{*T}_s)^2] = \int_{s}^t k^2(t-u)\mathbb{E}[\lambda^T_{Tu}]\mathrm{d}u + \int_0^s \big( k(t-u)-k(s-u)\big)^2 \mathbb{E}[\lambda^T_{Tu}]\mathrm{d}u.
$$
Using that $\mathbb{E}[\lambda^T_u] \leq \frac{\mu}{1 - \gamma-\beta}$ we obtain
$$
\mathbb{E}[(Z^{*T}_t - Z^{*T}_s)^2] \leq \frac{\mu}{1 - \gamma-\beta}\Big( \int_{s}^t k^2(t-u)\mathrm{d}u + \int_0^s \big(k(t-u)-k(s-u)\big)^2 \mathrm{d}u\Big).
$$
According to \cite{jaber2019sve} we have
$$
\int_0^1 \int_0^1 \int_{s\wedge t }^{s \vee t} \frac{k(s\vee t - u)^2}{|t-s|^{1+2\eta}}\mathrm{d}u\mathrm{d}s\mathrm{d}t\leq \frac{1}{\eta}\int_0^1 |k(t)|^2 t^{-2\eta}\mathrm{d}t
$$
and
$$
\int_0^1 \int_0^1 \int_{s\wedge t }^{s \vee t} \frac{|k(t - u) - k(s - u)|^2}{|t-s|^{1+2\eta}}\mathrm{d}u\mathrm{d}s\mathrm{d}t\leq \int_0^1\int_0^1 \frac{|k(t)-k(s)|^2}{|t-s|^{1+2\eta}} \mathrm{d}s\mathrm{d}t,
$$
which is bounded from Assumption \ref{assumption:full_quadratic} $ii)$. Finally using Fubini's theorem we deduce that $(\mathbb{E}[\|Z^{*T}\|^2_{W^{\eta, 2}([0, 1])}])_{T\geq 0}$ is bounded. So $(Z^{*T})_{T\geq 0}$ is tight in $L^2([0, 1])$.\\

\noindent Before going to the next step we prove the following lemma.
\begin{lemma}
\label{lemma:ucp}  The sequence of martingales $X^{T}-\Lambda^{*T}$ converges to $0$ uniformly in probability on $[0,1]$.
 \end{lemma}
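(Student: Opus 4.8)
The plan is to realise $X^{T}-\Lambda^{*T}$ as a rescaled counting-process martingale and to control its running supremum by Doob's $L^{2}$ inequality, relying on the uniform bound on $\mathbb{E}[\lambda^{T}]$ already obtained in the tightness step.

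First I would observe that $\Lambda^{*T}_{t}=\int_{0}^{t}\lambda^{T}_{Ts}\,\mathrm{d}s$ is by construction the predictable compensator of $X^{T}_{t}=N^{T}_{tT}/T$, so that $X^{T}-\Lambda^{*T}$ is indeed a martingale; more precisely, writing $M^{T}_{t}=N^{T}_{t}-\int_{0}^{t}\lambda^{T}_{s}\,\mathrm{d}s$, one has $X^{T}_{t}-\Lambda^{*T}_{t}=T^{-1}M^{T}_{tT}=T^{-1/2}M^{*T}_{t}$. Since the jumps of this process have size $1/T$, its predictable quadratic variation is $\langle X^{T}-\Lambda^{*T}\rangle_{t}=T^{-2}\int_{0}^{tT}\lambda^{T}_{s}\,\mathrm{d}s$.

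Next I would take expectations and use the inequality $\mathbb{E}[\lambda^{T}_{s}]\le \mu/(1-\gamma-\beta)$ established above (valid uniformly in $T$ under Assumption~\ref{assumption:full_quadratic}, where $\|\phi_{T}\|_{1}=\beta$ and $\|k_{T}\|_{2}^{2}=\gamma$ do not depend on $T$), which gives $\mathbb{E}\big[\langle X^{T}-\Lambda^{*T}\rangle_{1}\big]\le \mu/\big(T(1-\gamma-\beta)\big)$. Applying Doob's maximal inequality to the martingale $X^{T}-\Lambda^{*T}$ on $[0,1]$ then yields $\mathbb{E}\big[\sup_{t\in[0,1]}|X^{T}_{t}-\Lambda^{*T}_{t}|^{2}\big]\le 4\mu/\big(T(1-\gamma-\beta)\big)\to 0$, and $L^{2}$-convergence of the supremum is stronger than the stated convergence to $0$ uniformly in probability. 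A shortcut avoiding Doob is also available: the $C$-tightness of $(M^{*T})_{T\ge 0}$ proved just above forces $\sup_{t\in[0,1]}|M^{*T}_{t}|$ to be bounded in probability uniformly in $T$, so $\sup_{t\in[0,1]}|X^{T}_{t}-\Lambda^{*T}_{t}|=T^{-1/2}\sup_{t\in[0,1]}|M^{*T}_{t}|\to 0$ in probability.

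I do not expect a genuine obstacle in this lemma; the only points needing a little care are the correct identification of the compensator and predictable quadratic variation of the time-changed, $1/T$-normalised counting martingale, and the observation that the bound $\mathbb{E}[\lambda^{T}_{s}]\le \mu/(1-\gamma-\beta)$ is truly uniform in $T$ — which is precisely where the standing assumption $\|\phi_{T}\|_{1}+\|k_{T}\|_{2}^{2}=\beta+\gamma<1$ (independent of $T$) is used.
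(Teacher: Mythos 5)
Your proof is correct and follows essentially the same route as the paper: identify $X^{T}-\Lambda^{*T}$ as the rescaled counting-process martingale $T^{-1}M^{T}_{tT}$, bound its second moment at time $1$ by $\mathbb{E}[N^{T}_{T}]/T^{2}\le \mu/\bigl(T(1-\gamma-\beta)\bigr)$ using the uniform intensity bound, and conclude by Doob's maximal inequality. The tightness-based shortcut you mention is also valid (the $C$-tightness of $(M^{*T})_{T\ge 0}$ is established before the lemma and does not use it), but it is not needed.
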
 
 \begin{proof}
Since $N^T_t - \int_0^t \lambda^T_s\mathrm{d}s$ is a true martingale, from Doob's inequality we get 
$$
\mathbb{E}[ \sup_{t \in [0,1]}  (X_t^{T}-\Lambda_t^{T*})^2 ]\leq \frac{1}{T^2} \mathbb{E}[N^{T}_T].
$$
Using that $\mathbb{E}[N^T_T] = T \mathbb{E}[ \Lambda^{*T}_1]$ we deduce
$$
\mathbb{E}[ \sup_{t \in [0,1]}  (X_t^{T}-\Lambda_t^{*T})^2 ] \leq \frac{1}{T} \frac{\mu }{1 - \gamma-\beta},
$$
which concludes the proof.
\end{proof}

\subsubsection{Dynamic of the limit points}
\label{subsubsection:proof_a_ii}

We now consider $(X, X, M, P, Z)$ a limit point of $(X^T, \Lambda^{*T}, M^{*T}, P^{*T}, Z^{*T})_{T\geq 0}$. Using Skorohod representation theorem and the fact that $(X, X, M, P)$ is continuous, we may consider that almost surely $(X^T, \Lambda^{*T}, M^{*T}, P^{*T})_{T\geq 0}$ converges uniformly on $[0, 1]$ towards $(X, X, M, P)$ and $(Z^{*T})_{T\geq 0}$ converges in $L^2([0, 1])$ towards $Z$:
\begin{align}
\label{eq:proof_unif_conv}
 \underset{t\in [0, 1]}{\sup}|X^T_t - X_t|\underset{T\rightarrow +\infty}{\rightarrow}  0&,~~\underset{t\in [0, 1]}{\sup}|M^{*T}_t -M_t|  \underset{T\rightarrow +\infty}{\rightarrow}  0,\\
\nonumber \underset{t\in [0, 1]}{\sup}|P^{* T}_t -P_t| \underset{T\rightarrow +\infty}{\rightarrow}  0&\text{ and } \int_0^1 (Z^{*T}_s - Z_s)^2\mathrm{d}s\underset{T\rightarrow +\infty}{\rightarrow}  0.
\end{align}
From Corollary IX-1.19 in \cite{jacod2013limit} we have that $M$ and $P$ are local martingales. Moreover $[ M^{*T} ] = [ P^{* T}] = X^T$ so Corollary VI-6.29 in \cite{jacod2013limit} gives that $[ M ] = [ P ]  = X $. Since $M$ and $P$ are continuous we have 
$$
\langle M\rangle = [ M ] = \langle P \rangle = [P]  = X.
$$
 We also note that $\mathbb{E}[X^T_1]$ is uniformly bounded in $T$. So from Fatou's lemma $X_1$ is integrable and $M$ and $P$ are true martingales. Moreover up to a subsequence $(Z^{*T})_{T\geq 0}$ converges almost surely towards $Z$. We deduce that $Z$ is adapted. Moreover since $\mathbb{E}[\underset{t\in [0, 1]}{\sup}(Z^{*T}_t)^2]$ is bounded we get $$\underset{t\in [0, 1]}{\sup}\mathbb{E}[Z_t^2]<+\infty.$$
\noindent We show that $\Lambda^{*T}$ converges almost surely uniformly on $[0, 1]$ towards
\begin{equation}
\label{eq:proof_aux_g}
\int_0^t \big(\mu + Z_s^2\big)\mathrm{d}s +\int_0^t F_1(t-s)\mathrm{d}X_{s},
\end{equation}
where $F_1(t) = \int_0^t \beta \phi(s)\mathrm{d}s$. We have
\begin{equation*}
\Lambda^{*T}_t = \int_0^t \big(\mu + (Z^{*T}_s)^2\big)\mathrm{d}s + \int_0^t \int_0^t \beta \phi(s-u)\mathrm{d}X^T_u\mathrm{d}s.
\end{equation*}
The almost sure convergence of $(Z^{*T})_{T\geq 0}$ in $L^2$ towards $Z$ implies that almost surely, uniformly in $t\in [0, 1]$,
$$
\int_0^t (Z^{*T}_s)^2 \mathrm{d}s \underset{T\rightarrow +\infty}{\rightarrow}  \int_0^t Z_s^2 \mathrm{d}s.
$$
Moreover using Ito's formula together with Fubini's theorem we get
$$
\int_0^t \int_0^s \beta \phi(s-u)\mathrm{d}X^{T}_u\mathrm{d}s = \int_0^t \beta \phi(t-s)X^T_s\mathrm{d}s.
$$
From Equation \eqref{eq:proof_unif_conv}, we deduce that this quantity converges almost surely uniformly towards 
$$
\int_0^t \beta \phi(t-s)X_s\mathrm{d}s.
$$
Again Ito's formula together with Fubini's theorem give
$$
\int_0^t \beta \phi(t-s)X_s\mathrm{d}s  =  \int_0^t F_1(t-s)\mathrm{d}X_s.
$$
So we obtain the almost sure uniform convergence of $(\Lambda^{*T})_{T\geq 0}$ towards \eqref{eq:proof_aux_g}. Consequently, using Lemma \ref{lemma:ucp}, we deduce
$$
X_t = \int_0^t \mu + Z_s^2 \mathrm{d}s + \int_0^t F_1(t-s)\mathrm{d}X_s
$$
and eventually
$$
X_t =  \int_0^t \mu + Z^2_s + \int_0^s \beta \phi(s-u)\mathrm{d}X_u \mathrm{d}s.
$$
Thus $X$ is absolutely continuous with respect to the Lebesgue measure with derivative $V$ given by
\begin{equation*}
\label{eq_aux_h}
V_t =  \mu + Z^2_t + \int_0^t \beta \phi(t-s) V_s \mathrm{d}s.
\end{equation*}
Letting $\psi = \sum_{i\geq 1} (\beta \phi)^{*i}$ we have
\begin{equation}
\label{eq_aux_g}
V_t = \mu + Z_t^2 + \int_{0}^t \psi(t-s)Z_s^2\mathrm{d}s.
\end{equation}
The boundedness of $(\mathbb{E}[Z_t^2])_{t\in [0, 1]}$ gives that $(V_t)_{t\in[0, 1]}$ is uniformly bounded in $L^1$.\\

\noindent We now prove that 
$$
Z_t = \int_{0}^tk(t-s)\mathrm{d}P_s.
$$
Using Cauchy-Schwarz inequality, the convergence of $(Z^{*T})_{T\geq 0}$ implies that almost surely, uniformly in $t\in [0, 1],$
$$
\int_0^t Z^{*T}_s\mathrm{d}s \underset{T\rightarrow +\infty}{\rightarrow} \int_0^t Z_s\mathrm{d}s.
$$
From Ito's formula we get 
$$
\int_0^t Z^{*T}_s\mathrm{d}s = \int_0^t k(t-s)P^{*T}_s\mathrm{d}s
$$
and using Equation \eqref{eq:proof_unif_conv} we deduce that it converges almost surely uniformly towards 
$$
\int_0^t k(t-s)P_s\mathrm{d}s.
$$
Since $F_2(t) = \int_0^t \gamma k^2(s)\mathrm{d}s<1$ we have 
$$
\int_0^t \int_0^s k(s-u)^2\mathrm{d}X_u\mathrm{d}s = \int_0^t F_2(t-s)\mathrm{d}X_s < X_t < +\infty.
$$
So we can use the stochastic Fubini theorem and show that
$$
\int_0^t \int_0^s k(s-u)\mathrm{d}P_u\mathrm{d}s = \int_0^t k(t-s)P_s\mathrm{d}s.
$$
Thus almost surely, for any $t\in [0, 1],$
\begin{equation*}
\int_{0}^t Z_s\mathrm{d}s = \int_0^t \int_0^s k(s-u)\mathrm{d}P_u\mathrm{d}s
\end{equation*}
and
\begin{equation*}
Z_t = Z_0 + \int_0^t \sqrt{\gamma} k(t-s)\mathrm{d}P_s.
\end{equation*}
Moreover, from Theorem V-3.9 in \cite{revuz2013continuous}, there exists a Brownian motion $B$ such that
$$
P_t = \int_0^t \sqrt{V_s}\mathrm{d}B_s
$$
and finally we get
\begin{equation*}
\label{eq:volterra}
Z_t = \int_0^t \sqrt{\gamma} k(t-s)\sqrt{V_s}\mathrm{d}B_s.
\end{equation*}

\noindent We recall that $(\mathbb{E}[V_t])_{t\in [0, 1]}$ is bounded in $L^2$. So using  Assumption \ref{assumption:full_quadratic} $ii)$ together with Theorems 3.1 and 3.3 in \cite{zhang2010stochastic} we obtain that the process $Z$ is continuous. Therefore using \eqref{eq_aux_g} $V$ is also continuous. This concludes this part of the proof.

\subsubsection{Regularity property}

We now consider that $k$ is given by $f^{H+1/2, \lambda}$ for $H\in (0, 1/2)$ and $\lambda>0$. We can write 
$$
\int_0^t Z_s\mathrm{d}s = \int_0^t \sqrt{\gamma}f^{H+1/2, \lambda}(t-s)P_s\mathrm{d}s.
$$
Since $P_t = \int_{0}^t \sqrt{V_s}\mathrm{d}B_s$, $P$ has the same regularity as a Brownian motion. Thus we can use the same arguments as in Section 4.4 in \cite{Jaisson:2016aa} to deduce that $Z$, and therefore $V$, are $H-\varepsilon$ H\"older for any $\varepsilon>0$.

\subsection{Proof of Theorem~\ref{th:full_quadratic_unstable}}
\label{proof:full_quadratic_unstable}

We proceed again in three steps. First we show that the sequence $(X^T, M^{*T}, P^{*T})_{T\geq 0}$ is $C$-tight for the Skorohod topology. Then we prove the results about the dynamics of the limit points and finally those on the regularity of the limit points.

\subsubsection{Tightness of $(X^T, M^{*T}, P^{*T})_{T\geq 0}$}
Recall the definition of the renormalized processes
$$
\lambda^{*T}_t = \frac{1 - a_T}{\mu_T}\lambda^T_{tT}, ~~\Lambda^{*T}_t = \frac{1-a^T}{T \mu_T}\int_{0}^{tT} \lambda^T_s\mathrm{d}s,~~X^T_t  = \frac{1-a^T}{T \mu_T }N^T_{tT},~~Z^{*T}_t = Z^{T}_{tT}/\sqrt{\mu_T}
$$
$$
M^{*T}_t = \sqrt{ \frac{1-a_T}{T \mu_T}} M^T_{tT} \text{ and } P^{*T}_t = \sqrt{ \frac{1-a_T}{T \mu_T}} P^{T}_{tT}.
$$

\noindent We have 
$$
\mathbb{E}[\lambda^T_t]\leq \mu_T + \int_0^t \big( k_T^2(t-s) + \phi_T(t-s) \big)\mathbb{E}[\lambda^T_s] \mathrm{d}s.
$$
Thus
$$
\mathbb{E}[\lambda^T_t]\leq \frac{ \mu_T }{ 1 - \|\phi_T\|_1 - \|k_T\|_2^2}
$$
and consequently
$$
\mathbb{E}[\lambda^{*T}_t]\leq  \frac{ 1-a_T }{ 1 - \beta_T - \|k_T\|_2^2 } = 1.
$$
So
$$
\mathbb{E}[X^{T}_1]=\mathbb{E}[\Lambda^{*T}_1] \leq 1,
$$
which gives the tightness of the sequences $\big((X^{T}_t)_{t\in [0, 1]}\big)_{T\geq 0}$ and $\big( (\Lambda^{*T}_t)_{t\in [0, 1]}\big)_{T\geq 0}$, both of them being increasing. Actually we get $C$-tightness since  $|\Delta X^{T}| + |\Delta \Lambda^{*T}|\leq \frac{1-a_T}{T \mu_T }$ that goes to zero as $T$ goes to infinity. Remark that Lemma \ref{lemma:ucp} still holds under Assumption \ref{assumption:full_quadratic_unstable}.\\

\noindent The tightness of $(M^{*T})_{T\geq 0}$ and $(P^{*T})_{T\geq 0}$ follows from Theorem VI-4.13 in \cite{jacod2013limit} because $\langle M^{*T}\rangle_t = \langle P^{* T}\rangle_t = \Lambda^{*T}_t$ and $\langle M^{*T}, P^{*T}\rangle = 0$. We then obtain $C$-tightness since $|\Delta M^{*T}|+|\Delta P^{*T}|\leq 2/T$. Finally $(X^T, \Lambda^{*T}, M^{* T}, P^{* T})_{T\geq 0}$ is $C$-tight for the Skorohod topology on $[0, 1]$.

\subsubsection{Dynamics of the limit points}

We now take $(X, X, M, P)$ a limit point of $(X^T, \Lambda^{*T}, M^{*T}, P^{*T})_{T\geq 0}$. Since $(X, X, M, P)$ is continuous, according to the Skorohod representation theorem, we can consider that $(X^T, \Lambda^{*T}, M^{*T}, P^{*T})_{T\geq 0}$ converges almost surely uniformly towards $(X, X, M, P)$:
\begin{equation*}
\underset{t\in [0, 1]}{\sup}|X^T_t - X_t|\underset{T\rightarrow +\infty}{\rightarrow}  0,~~\underset{t\in [0, 1]}{\sup}|\Lambda^{*T}_t - X_t|\underset{T\rightarrow +\infty}{\rightarrow}  0,
\end{equation*}
and
\begin{equation}
\label{eq:proof_aux_e}
\underset{t\in [0, 1]}{\sup}|M^{* T}_t - M_t|\underset{T\rightarrow +\infty}{\rightarrow}  0,~~\underset{t\in [0, 1]}{\sup}|P^{* T}_t - P_t|\underset{T\rightarrow +\infty}{\rightarrow}  0.
\end{equation}
From Corollary IX-1.19 in \cite{jacod2013limit}, we have that $M$ and $P$ are local martingales. Moreover since $[ M^{*T} ] = [ P^{* T}] = X^T$, we have $[ M ] = [ P ]  = X $ and $\langle M, P\rangle = 0$ using Corollary VI-6.29 in \cite{jacod2013limit}. Because $M$ and $P$ are continuous, we deduce
$$
\langle M\rangle = [ M ] = \langle P\rangle = [ P ]  = X.
$$
Because $\mathbb{E}[X^T_1]$ is uniformly bounded in $T$, we get that $X_1$ is in $L^1$ and so $M$ and $P$ are true martingales. In addition, the Dambis-Dubin-Schwarz theorem gives the existence of two independent Brownian motions $B^{(1)}$ and $B^{(2)}$ such that
$$
M_t = B^{(1)}_{X_t} \text{ and }P_t= B^{(2)}_{X_t}.
$$

\noindent Recall that for $F^T(t) = \int_{0}^t T(1-a_T)\psi_T(Ts)\mathrm{d}s$ we have
\begin{eqnarray*}
 \Lambda^{*T}_t&=& t(1-a_T)+ \int_{0}^{t}F^T(t-s)ds+\int_{0}^{t}\frac{F^T(t-s)}{\sqrt{T(1-a_T)\mu_{T}}}dM^{*T}_s \\
 && +\int_{0}^{t} F^T(t-s)(Z^{*T}_s)^{2}ds + \int_0^t (1-a_T)(Z_s^{*T})^2\mathrm{d}s.
\end{eqnarray*} 
According to Lemma 4.3 in \cite{Jaisson:2016aa}, we have the uniform convergence
$$
\int_{0}^{t}F^T(t-s)ds \underset{T \rightarrow +\infty}{\rightarrow} \int_{0}^{t}\frac{1}{2}F^{\alpha, \lambda}(t-s)ds.
$$
Using integration by parts we obtain
$$
Z^{*T}_t =  k(0) P^{*T}_t + \int_0^t k'(t-s) P^{* T}_s\mathrm{d}s.
$$
Assumption \ref{assumption:full_quadratic_unstable} $i)$ implies that $k'$ is bounded on $[0, 1]$. As a consequence of \eqref{eq:proof_aux_e} we have that almost surely, $Z^{*T}$ converges uniformly on $[0, 1]$ towards
$$
k(0) P_t + \int_0^t k'(t-s) \mathrm{d}P_s\mathrm{d}s = \int_0^t k(t-s)\mathrm{d}P_s
$$
which is continuous. This convergence together with Lemma 4.3 in \cite{Jaisson:2016aa} implies that almost surely, uniformly in $t\in [0, 1]$,
$$
\int_{0}^{t} F^T(t-s)(Z^{*T}_s)^{2}ds \underset{T\rightarrow +\infty}{\rightarrow}  \int_{0}^{t} F^{\alpha, \lambda}(t-s)Z_s^{2}ds
$$
and
$$
(1-a_T)\int_0^t (Z^{*T}_s)^2 \mathrm{d}s \underset{T\rightarrow +\infty}{\rightarrow}  0.
$$
We now prove that $\int_{0}^{t}\frac{F^T(t-s)}{\sqrt{T(1-a_T)\mu_{T}}}dM^{*T}_s$ converges uniformly in probability towards
$$
 \int_{0}^{t}\frac{f^{\alpha, \lambda}(t-s)}{2\sqrt{\lambda \mu^*}}M_s \mathrm{d}s.
$$
Using integration by parts we have
$$
\int_{0}^{t}\frac{F^T(t-s)}{\sqrt{T(1-a_T)\mu_{T}}}\mathrm{d}M^{*T}_s = \int_{0}^{t}\frac{f^T(t-s)}{\sqrt{T(1-a_T)\mu_{T}}}M^{*T}_s\mathrm{d}s.
$$
Remark that 
$$
\int_{0}^{t} f^T(t-s) M^{*T}_s \mathrm{d}s - \int_{0}^{t}f^{\alpha, \lambda}(t-s)\frac{1}{2} M_s\mathrm{d}s
$$
can be written
\begin{equation}
\label{eq:proof_aux_f}
\int_0^t \frac{1}{2}f^{\alpha, \lambda}(t-s) (M^{*T}_s -M_s)\mathrm{d}s + \int_0^t \big( f^{T}(t-s) - \frac{1}{2}f^{\alpha, \lambda}(t-s)\big)M^{*T}_s\mathrm{d}s.
\end{equation}
The first term in \eqref{eq:proof_aux_f} goes almost surely uniformly to zero using \eqref{eq:proof_aux_e} and the fact that $f^{\alpha, \lambda}\in L^1$. Applying integration by parts again we obtain
$$
\int_0^t \big( f^{T}(t-s) - f^{\alpha, \lambda}(t-s)\big)M^{*T}_s\mathrm{d}s = \int_0^t \big( F^{T}(t-s) - \frac{1}{2}F^{\alpha,\lambda}(t-s)\big)\mathrm{d}M^{*T}_s
$$
and using Burkholder-Davis-Gundy inequality we get ($C$ denotes here a positive constant that varies from line to line)
\begin{align*}
\mathbb{E}\Big[ \underset{t\in [0, 1]}{\sup} \Big(\int_0^t \big( F^{T}(t-s) - &\frac{1}{2} F^{\alpha,\lambda}(t-s)\big)\mathrm{d}M^{*T}_s \Big)^2\Big]\\
&\leq C \mathbb{E}\Big[\int_0^t \big( F^{T}(t-s) - \frac{1}{2}F^{\alpha,\lambda}(t-s)\big)^2\mathrm{d}X^{T}_s\Big]\\
&\leq C  \int_0^t \big( F^{T}(t-s) - \frac{1}{2}F^{\alpha, \lambda}(t-s)\big)^2 \frac{1-a_T}{\mu_T}\mathbb{E}[\lambda^T_{Ts}]\mathrm{d}s\\
&\leq C  \int_0^t \big( F^{T}(t-s) - \frac{1}{2}F^{\alpha,\lambda}(t-s)\big)^2 \mathrm{d}s.
\end{align*}
This converges to zero according to Lemma 4.3 in \cite{Jaisson:2016aa}. So we have proved that
$$
X_t = \int_0^t \frac{1}{2}F^{\alpha, \lambda}(t-s)( 1 + Z_s^2 )\mathrm{d}s+ \int_0^t f^{\alpha, \lambda}(t-s)\frac{1}{2\sqrt{\lambda\mu^*}}M_s\mathrm{d}s
$$
with
$$
Z_t^2 = \int_0^t k(t-s)\mathrm{d}P_s.
$$

\subsubsection{Regularity property }

We can write $X$ as
$$
X_t = \int_0^t \frac{1}{2} f^{\alpha, \lambda}(t-s)\big( s + \int_0^s Z_u^2\mathrm{d}u + M_s\big) \mathrm{d}s.
$$
Since $Z$ is continuous, $\int_0^tZ_s^2\mathrm{d}s$ is continuously differentiable. So using the same arguments as in Sections 4.3 and 4.4 in \cite{Jaisson:2016aa} replacing $s$ by $s + \int_0^s Z_u^2 \mathrm{d}u$, we obtain that almost surely, $X$ is differentiable with derivative $V$ satisfying
$$
V_t = \int_0^t \frac{1}{2}f^{\alpha, \lambda}(t-s) ( 1 + Z_s^2)\mathrm{d}s + \int_0^t \frac{1}{2}f^{\alpha, \lambda}(t-s) \frac{1}{\sqrt{\lambda\mu^*}}\mathrm{d}M_s.
$$
We get the stated result using Theorem V-3.9 in \cite{revuz2013continuous} which gives the existence of two independent Brownian motions $B^{(1)}$ and $B^{(2)}$ such that 
$$
M_t = \int_0^t \sqrt{V_s}\mathrm{d}B^{(1)}_s \text{ and }P_t = \int_0^t \sqrt{V_s}\mathrm{d}B^{(2)}_s.
$$
The regularity property of $V$ is also deduced using the same arguments as in Sections 4.3 and 4.4 in \cite{Jaisson:2016aa}.

\section*{Acknowledgments}

The authors are grateful to Jean-Philippe Bouchaud and Jim Gatheral for many interesting discussions.
The authors gratefully acknowledge financial support of the ERC Grant 679836 Staqamof and of the Chair {\it Analytics and Models for regulation}.

\bibliographystyle{abbrv}
\bibliography{bibliodjrarxivv2}

\begin{thebibliography}{10}

\bibitem{jaber2019sve}
E.~Abi~Jaber, C.~Cuchiero, M.~Larsson, and S.~Pulido.
\newblock Existence and stability for stochastic {V}olterra equations of
  convolution type with jumps.
\newblock {\em preprint}, 2019.

\bibitem{jaber2018multi}
E.~Abi~Jaber and O.~El~Euch.
\newblock Multi-factor approximation of rough volatility models.
\newblock {\em SIAM Journal on Financial Mathematics}, 10(2):309--349, 2019.

\bibitem{abi2017affine}
E.~Abi~Jaber, M.~Larsson, and S.~Pulido.
\newblock Affine {V}olterra processes.
\newblock {\em arXiv preprint arXiv:1708.08796}, 2017.

\bibitem{BACRY20132475}
E.~Bacry, S.~Delattre, M.~Hoffmann, and J.~Muzy.
\newblock Some limit theorems for {H}awkes processes and application to
  financial statistics.
\newblock {\em Stochastic Processes and their Applications}, 123(7):2475 --
  2499, 2013.

\bibitem{Bayer:2016aa}
C.~Bayer, P.~Friz, and J.~Gatheral.
\newblock Pricing under rough volatility.
\newblock {\em Quantitative Finance}, 16(6):887--904, 2016.

\bibitem{bennedsen2016decoupling}
M.~Bennedsen, A.~Lunde, and M.~S. Pakkanen.
\newblock Decoupling the short-and long-term behavior of stochastic volatility.
\newblock {\em arXiv preprint arXiv:1610.00332}, 2016.

\bibitem{bingham1989regular}
N.~H. Bingham, C.~M. Goldie, and J.~L. Teugels.
\newblock {\em Regular variation}, volume~27.
\newblock Cambridge university press, 1989.

\bibitem{blanc2017quadratic}
P.~Blanc, J.~Donier, and J.-P. Bouchaud.
\newblock Quadratic {H}awkes processes for financial prices.
\newblock {\em Quantitative Finance}, 17(2):171--188, 2017.

\bibitem{chicheportiche2014fine}
R.~Chicheportiche and J.-P. Bouchaud.
\newblock The fine-structure of volatility feedback.
\newblock {\em Physica A: Statistical Mechanics and its Applications},
  410:174--195, 2012.

\bibitem{da2018volatility}
J.~Da~Fonseca and W.~Zhang.
\newblock Volatility of volatility is (also) rough.
\newblock {\em Journal of Futures Markets}, 39:600--611, 2019.

\bibitem{el2018microstructural}
O.~El~Euch, M.~Fukasawa, and M.~Rosenbaum.
\newblock The microstructural foundations of leverage effect and rough
  volatility.
\newblock {\em Finance and Stochastics}, 22(2):241--280, 2018.

\bibitem{euch2018zumbach}
O.~El~Euch, J.~Gatheral, R.~Radoi{\v{c}}i{\'c}, and M.~Rosenbaum.
\newblock The {Z}umbach effect under rough {H}eston.
\newblock {\em arXiv preprint arXiv:1809.02098}, 2018.

\bibitem{el2020zumbach}
O.~El~Euch, J.~Gatheral, R.~Radoi{\v{c}}i{\'c}, and M.~Rosenbaum.
\newblock The zumbach effect under rough heston.
\newblock {\em Quantitative Finance}, 20(2):235--241, 2020.

\bibitem{el2018perfect}
O.~El~Euch and M.~Rosenbaum.
\newblock Perfect hedging in rough {H}eston models.
\newblock {\em The Annals of Applied Probability}, 28(6):3813--3856, 2018.

\bibitem{el2019characteristic}
O.~El~Euch and M.~Rosenbaum.
\newblock The characteristic function of rough {H}eston models.
\newblock {\em Mathematical Finance}, 29(1):3--38, 2019.

\bibitem{engle1982autoregressive}
R.~F. Engle.
\newblock Autoregressive conditional heteroscedasticity with estimates of the
  variance of united kingdom inflation.
\newblock {\em Econometrica: Journal of the Econometric Society},
  50(4):987--1007, 1982.

\bibitem{engle1986modelling}
R.~F. Engle and T.~Bollerslev.
\newblock Modelling the persistence of conditional variances.
\newblock {\em Econometric reviews}, 5(1):1--50, 1986.

\bibitem{filimonov2012quantifying}
V.~Filimonov and D.~Sornette.
\newblock Quantifying reflexivity in financial markets: Toward a prediction of
  flash crashes.
\newblock {\em Physical Review E}, 85(5):056108, 2012.

\bibitem{flandoli1995martingale}
F.~Flandoli and D.~Gatarek.
\newblock Martingale and stationary solutions for stochastic navier-stokes
  equations.
\newblock {\em Probability Theory and Related Fields}, 102(3):367--391, 1995.

\bibitem{gatheral2018volatility}
J.~Gatheral, T.~Jaisson, and M.~Rosenbaum.
\newblock Volatility is rough.
\newblock {\em Quantitative Finance}, 18(6):933--949, 2018.

\bibitem{glasserman2018buy}
P.~Glasserman and P.~He.
\newblock Buy rough, sell smooth.
\newblock {\em Working paper}, 2018.

\bibitem{hardiman2013critical}
S.~J. Hardiman, N.~Bercot, and J.-P. Bouchaud.
\newblock Critical reflexivity in financial markets: a hawkes process analysis.
\newblock {\em The European Physical Journal B}, 86(10):442, 2013.

\bibitem{jaber2019affine}
E.~A. Jaber, M.~Larsson, S.~Pulido, et~al.
\newblock Affine volterra processes.
\newblock {\em The Annals of Applied Probability}, 29(5):3155--3200, 2019.

\bibitem{jacod1975multivariate}
J.~Jacod.
\newblock Multivariate point processes: predictable projection, radon-nikodym
  derivatives, representation of martingales.
\newblock {\em Zeitschrift f{\"u}r Wahrscheinlichkeitstheorie und verwandte
  Gebiete}, 31(3):235--253, 1975.

\bibitem{jacod2013limit}
J.~Jacod and A.~Shiryaev.
\newblock {\em Limit theorems for stochastic processes}, volume 288.
\newblock Springer Science \& Business Media, 2013.

\bibitem{Jaisson:2015aa}
T.~Jaisson and M.~Rosenbaum.
\newblock Limit theorems for nearly unstable {H}awkes processes.
\newblock {\em The Annals of Applied Probability}, 25(2):600--631, 2015.

\bibitem{Jaisson:2016aa}
T.~Jaisson and M.~Rosenbaum.
\newblock Rough fractional diffusions as scaling limits of nearly unstable
  heavy tailed {H}awkes processes.
\newblock {\em The Annals of Applied Probability}, 26(5):2860--2882, 2016.

\bibitem{Jakubowski:1989aa}
A.~Jakubowski, J.~M{\'e}min, and G.~Pag{\`e}s.
\newblock Convergence en loi des suites d'int{\'e}grales stochastiques sur
  l'espace 1 de {S}korokhod.
\newblock {\em Probability Theory and Related Fields}, 81(1):111--137, 1989.

\bibitem{jusselin2018no}
P.~Jusselin and M.~Rosenbaum.
\newblock No-arbitrage implies power-law market impact and rough volatility.
\newblock {\em To appear in Mathematical Finance}, 2018.

\bibitem{Kurtz:1991aa}
T.~G. Kurtz and P.~Protter.
\newblock Weak limit theorems for stochastic integrals and stochastic
  differential equations.
\newblock {\em The Annals of Probability}, 19(3):1035--1070, 1991.

\bibitem{livieri2018rough}
G.~Livieri, S.~Mouti, A.~Pallavicini, and M.~Rosenbaum.
\newblock Rough volatility: evidence from option prices.
\newblock {\em IISE Transactions}, 50(9):767--776, 2018.

\bibitem{lynch2003market}
P.~E. Lynch and G.~Zumbach.
\newblock Market heterogeneities and the causal structure of volatility.
\newblock {\em Quantitative Finance}, 3(4):320--331, 2003.

\bibitem{nelson1990arch}
D.~B. Nelson.
\newblock {ARCH} models as diffusion approximations.
\newblock {\em Journal of econometrics}, 45(1-2):7--38, 1990.

\bibitem{ogata1981lewis}
Y.~Ogata.
\newblock On {L}ewis' simulation method for point processes.
\newblock {\em IEEE Transactions on Information Theory}, 27(1):23--31, 1981.

\bibitem{Protter:2005aa}
P.~E. Protter.
\newblock {\em Stochastic differential equations}.
\newblock Springer, 2005.

\bibitem{revuz2013continuous}
D.~Revuz and M.~Yor.
\newblock {\em Continuous martingales and Brownian motion}, volume 293.
\newblock Springer Science \& Business Media, 2013.

\bibitem{sentana1995quadratic}
E.~Sentana.
\newblock Quadratic {ARCH} models.
\newblock {\em The Review of Economic Studies}, 62(4):639--661, 1995.

\bibitem{veraar2012stochastic}
M.~Veraar.
\newblock The stochastic {F}ubini theorem revisited.
\newblock {\em Stochastics An International Journal of Probability and
  Stochastic Processes}, 84(4):543--551, 2012.

\bibitem{zhang2010stochastic}
X.~Zhang.
\newblock Stochastic {V}olterra equations in {B}anach spaces and stochastic
  partial differential equation.
\newblock {\em Journal of Functional Analysis}, 258(4):1361--1425, 2010.

\bibitem{zumbach2009time}
G.~Zumbach.
\newblock Time reversal invariance in finance.
\newblock {\em Quantitative Finance}, 9(5):505--515, 2009.

\bibitem{zumbach2010volatility}
G.~Zumbach.
\newblock Volatility conditional on price trends.
\newblock {\em Quantitative Finance}, 10(4):431--442, 2010.

\end{thebibliography}

\end{document}